\title{On Time-subordinated Brownian Motion Processes for Financial Markets}
\author[1]{Rohan Shenoy}
\author[2]{Peter Kempthorne}
\affil[1]{Department of Mathematics, Imperial College London, rohan.shenoy22@imperial.ac.uk}
\affil[2]{Department of Mathematics, Massachusetts Institute of Technology, kempthorne@math.mit.edu}
\date{October 2025}
\begin{document}

\maketitle

\begin{abstract}

\noindent In the context of time-subordinated Brownian motion models,  Fourier theory and methodology are proposed to modelling the stochastic distribution of time increments. Gaussian Variance-Mean mixtures and time-subordinated models are reviewed with a key example being the Variance-Gamma process. A non-parametric characteristic function decomposition of subordinated Brownian motion is presented. The theory requires an extension of the real domain of certain characteristic functions to the complex plane, the validity of which is proven here. This allows one to characterise and study the stochastic time-change directly from the full process. An empirical decomposition of S\&P log-returns is provided to illustrate the methodology.\\

\noindent \textbf{Keywords.} Subordinator process; time-subordinated models; stochastic time-change; empirical characteristic function; variance-gamma process.
\end{abstract}

\tableofcontents

\newpage

\newtheorem{thm}{Theorem}
\newtheorem{cor}{Corollary}
\newtheorem{prop}{Proposition}
\newtheorem{lemma}{Lemma}
\newtheorem{definition}{Definition}
\newtheorem{remark}{Remark}
\newtheorem{example}{Example}
\newtheorem*{examples}{Examples}

\section{Introduction}\label{Section1}

Under many simple market models, the price returns of a financial asset are assumed to follow a log-Normal distribution.  The celebrated Black-Scholes-Merton option pricing model \cite{blackscholes}, 
assumes that a stock's log price follows a random walk in continuous time with a variance proportional to the square of the stock price, the Geometric Brownian Motion (GBM) model.
For any finite time interval, the distribution of price increments (i.e. "steps" in the random walk) is log-Normal and the distribution of the log return of stock prices over the time interval is Normal with variance a constant proportion of the length of the interval.

In empirical modelling of asset price dynamics, time series of asset prices are readily available. Commonly, the time frequency of such series is daily, but can be lower with weekly or monthly prices,  or higher with  time bars  which have \textit{fixed-length} time intervals (e.g. 30, 5, 1 minutes).  With higher frequencies, the market micro-structure at the transaction level plays a significant role with possible prices determined by bid and offer prices of market makers that are constrained to discrete price levels, and variation in trading activity over a trading day in terms of trade counts and trade volume.

In this paper, we focus on models of time series of closing prices of an asset with daily or lower frequency. Such time series are well-suited to applying the Geometric Brownian Motion model and extensions without the need to accommodate micro-structure issues of higher frequency data. Indeed, generalized  Central Limit Theorems might apply to model daily log-price increments as Normal since they are the sums of large-sample log-price increments at higher frequencies.

It is common to index daily price observations by market-day counts and assume  unit time increments over all successive price observations. Assuming the GBM model for such time series,  the time series of daily prices is transformed to a daily time series of log returns which is equivalent to a random sample (independent and identically distributed) from a Normal distribution  with mean equal to the daily drift rate and variance equal to the daily variance rate. However goodness-of-fit tests and diagnostics (e.g., Shapiro Wilk Normality Test \cite{ShapiroWilk1965} and Normal QQ Plots) of Normal model fits to  daily log return time series often \textit{reject} the Normal distribution assumption due to "heavy-tailed" empirical distributions that are not consistent with the Normal distribution shape.

To maintain use of the GBM model, many authors have proposed possible alternative time indices to market-day counts, such as shares traded or transaction counts (e.g. Clark(1973) \cite{Clark1973}). Making the time increments between price observations conform with a progress indicator is known as a time-subordination: a change of the original time-scale (cumulative count of market days) to an alternate basis. This new time-scale is often labelled as "business time" \cite{VeraartWinkel}.
 
Significantly,  the works of Madan and Milne (1991) \cite{MadanMilne} and Madan, Carr and Chang (1998) \cite{MadanCarrChang} have highlighted the possibility of improving upon the use of observable  indices by implementing a stochastic index through a random time-subordination. The authors demonstrate that the Variance Gamma model, a random time-subordinated model, provides more accurate pricing performance of call options than Geometric Brownian motion.  Specifically it values options higher, particularly for out-of-the-money options with  long maturity on stocks with high kurtosis.

Under a random time-subordination model, one infers that the economically relevant time in a market is itself a stochastic process, independent of other random processes affecting market dynamics.  This model allows the distribution of price movements itself on different market days to differ randomly because some days are "longer" and other days are "shorter"\footnote{In their paper of 1990, Madan and Seneta comment on this idea, stating: \textit{More informally, one may think of $G(t)$ [the stochastic index] as a formal statement of the remark, "Didn't have much of a year this year," by allowing for an interpretation of how much of a year one actually had.}}. The results of Skorokhod \cite{skororussia} and Monroe \cite{Monroe} (as highlighted in Veraart and Winkel \cite{VeraartWinkel}) demonstrate that in fact any arbitrage free model can be represented as a Brownian motion with random time change, providing more concrete motivation for studying such models as they represent a valuable change of viewpoint.

We study the mathematical model of log returns where the time increments follow a subordinator process  and the log price increments are attributed to Brownian motion on the stochastic time scale. Due to a theorem of Dubins and Schwarz \cite{DubinsSchwarz}, one can express any continuous local martingale $M$ as a time-changed Brownian motion where the time-change is given by the continuous quadratic variation of $M$. However, we find that the full process has a closer relationship with the subordinator process in Fourier space which could provide additional insight. To this end we develop an alternate, Fourier method which enables us to directly transform a time-changed Brownian motion process into its subordinator process. This is formulated as the time-change transform and allows one to reduce the study of the full process to studying just the stochastic time-change in isolation. We define the Fourier transform of the subordinator process from the Fourier transform of data from the full process and apply an inversion back into real space we obtain the probability distribution of the subordinator process.
Empirically, one can then characterise the underlying random subordinator process straight from price observations over market-day counts.

Empirical modelling of the subordinator process directly enables evaluation of its characteristics. A key running example in this paper is the Variance Gamma model of Madan et al. \cite{MadanMilne}, \cite{MadanCarrChang} which assumes the Gamma process for the time subordinator of log-transformed daily price time series.  For this case, the time subordinator process  has stationary increments which are independent over non-overlapping time intervals, properties satisfied by Lévy processes. A significant contribution of this paper is the theory and methods for estimation and hypothesis testing of Gaussian variance-mean mixture distributions for which the Variance-Gamma process is a special case.

Section \ref{SectionGVMM} characterises Gaussian variance-mean mixtures for modelling daily log returns - the special case of these models when the variance distribution is Gamma corresponds to the Variance-Gamma process. We then consider the extension into continuous time-subordinated Brownian motion processes in Section \ref{SectionTSBM}. This allows the formulation of the time-change transform in Section \ref{SectionTSD}.  Section \ref{SectionEA} presents a brief example of applying the time-change transform to the S$\&$P500 index with the, including evaluating its consistency with the Variance-Gamma process for fixed and varied holding periods of log-returns (varying the number of days in holding periods 1,5,10,15,20 days). Section \ref{SectionSummary} concludes the results.

\section{Gaussian Variance-Mean Mixtures}\label{SectionGVMM}
We begin with defining the Gaussian variance-mean mixture distribution, which arises as the distribution of the increment of a Brownian motion process with drift $\theta$ and volatility $\sigma$  over a \textit{stochastic} time increment.

Let $\{Y_t ,  t\geq 0\}$ denote the Brownian motion process with drift $\theta$ and volatility $\sigma$.  For initial time $t$ and a time increment of deterministic length $v\geq0$, the increment  of the Brownian motion:
$ X = Y_{t+v} - Y_t$
is a $ N(\mu_X,\sigma_X^2)$ random variable with mean $\mu_X=\theta v$ and variance $\sigma_X^2 = \sigma^2 v.$  It will be convenient to represent this distribution as an affine transformation of a standard Gaussian random variable, $Z \sim N(0,1)$:
$ X = \theta v + \sigma \sqrt{v} Z.$

\begin{definition}[Gaussian variance-mean mixture]\label{defvariancemeanmix}

The Gaussian variance-mean mixture distribution is the distribution of the increment of the Brownian motion process $\{Y_t, t \geq 0 \}$ (with drift $\theta$ and volatility $\sigma^2$ ) realized over a stochastic time increment $V$, a non-negative random variable independent of the Brownian motion process. The random variable 
\begin{equation}\label{variancemeanmix}
        X =  \theta V + \sigma \sqrt{V}Z
    \end{equation}
    is known as a Gaussian variance-mean mixture with mixing distribution $V$, denoted $GVM(\theta, \sigma, V).$
\end{definition}

The distribution of $X$ is fully characterized by the characteristic functions of $V$ and $Z$, which we explore in section \ref{SectionTSD}.  Proposition \ref{aVGcf} below details this for subordinated Brownian motion. The mean of the Gaussian variance-mean mixture  distribution is
$$\mathbb{E}[X] = \mathbb{E}_V [ \mathbb{E}_{X \mid V} [X]]] = \mathbb{E}_V[ \theta V] = \theta \mathbb{E}[V].$$
When $\theta=0$, the distribution is symmetric and is a {\it simple} variance mixture of Gaussian distributions with constant zero mean.  When $\theta \not = 0$, the symmetry of the distribution will depend on the symmetry of the mixing distribution of $V$. The mean and variance of $X$  each depend  on the distribution $V$ with their scales being linear functions of $V.$ When modelling daily log prices of stocks with annual units of time, the $V$ distribution corresponds to very small time increments $\approx 1/365$  or $1/252$. For such cases, the mean $ \theta V$ will be second order to the $\sqrt{V}Z$ (hence relatively negligible) and the distribution will be nearly symmetric. 
 
Gaussian variance-mean mixtures can be applied in empirical modelling of the price dynamics of financial assets or indices. Consider a \textit{discrete} time series of asset prices:
$\{p_0, p_1, p_2, \dots, p_n\} \,$
where $ \{t_0, t_1, \dots, t_n \}$ are the time points of the respective prices. With low frequency data, the time points are calendar dates and the prices are end-of-day prices.  If $y_j = \log(p_j)$, $j=0, \dots, n$ represents the time series of  log prices then 
$$ y_j = y_0 + \sum_{i=0}^j x_j,$$ 
where $x_j$ is the log price return
over the time interval $(t_{j-1}, t_j]$: $x_j = \log(p_j/p_{j-1}), \ j=1, \dots, n.$

It is common to let $j$ index the market days, weeks, or months depending on the frequency of the time series. Empirical modelling of the price dynamics of an asset treats the observed log price returns as realizations of random variables
$\{X_j\}_{j=0}^n.$ The simplest models assume that the $X_j$ log price returns are independent and identically distributed. A special case of this definition  introduced by Barndorff-Nielsen et al. \cite{Variancemeanmixtures} assumes further that $V$ is an infinitely divisible random variable. With this assumption, the discrete time stochastic process model can be embedded in a continuous-time model which has consistent specifications for alternate observation frequencies.

We would like to use the Gaussian variance-mean mixtures to be able to define time-subordinated Brownian motion processes 
\begin{equation}\label{TSBMearly}
    X_t = \theta \tau_t + \sigma W_{\tau_t},
\end{equation} where $\tau_t$ is a subordinator process independent of the Wiener process $W_t$. Our mixing distribution $V$ above will later characterise the stationary increments for the subordinating process $\tau_t$.
In this manner, the asymmetry term $\theta V$ relates to the random drift term $\theta\tau_t$, and the variance-mixing term $\sigma \sqrt{V}$ relates to the random noise term $\sigma W_{\tau_t}$. This will be explored in depth in section \ref{SectionTSBM}.
\begin{remark}\label{SVGdefremark}
    First consider the case $\theta=0$. Here $X$ becomes a symmetric distribution. Many of the potential $V$ distributions one considers are often 2+ parameter distributions making the parameters for the distribution $X$ non-identifiable if no further restrictions are imposed. However, this actually provides flexibility in interpreting properties of the distribution across different settings by choosing different restrictions.
    \begin{itemize}
        \item  If we restrict $V$ to have expectation $\mu = 1$, then $\sigma$ becomes a volatility parameter. This can be seen by using the independence of $V$ and $Z$,
        \begin{equation}
            \text{Var }[X] = \mathbb{E}\left[\left(\sigma\sqrt{V}Z\right)^2\right] = \sigma^2\mathbb{E}\left[{V}Z^2\right] = \sigma^2\mathbb{E}[V]\mathbb{E}\left[Z^2\right] = \sigma^2.
        \end{equation}
        This is a particularly useful notion which we will employ in section \ref{SectionTSBM} to define time-subordinated Brownian motion processes. 
        
        \item Alternatively, we could fix $\sigma$ (e.g $\sigma=1$) and directly observe changes to $V$ across different settings e.g. over different time periods for the process $X_t$.
    \end{itemize}

    For the case $\theta\neq0$, $X$ is no longer a symmetric distribution due to the $\theta V$ term. Further, $\sigma$ no longer represents the volatility as $\text{Var}[X] = \theta^2\text{Var}[V] + \sigma^2\mathbb{E}[V]$. However, if we let $\mathbb{E}[V] = \mu = 1$ and denote $\text{Var}[V] = \nu$ then the volatility is given by $\theta^2\nu + \sigma^2$. So the inclusion of the stochastic drift term determines the expectation just as a deterministic drift would, but additionally provides a further contribution to the variance above $\sigma^2$ unlike a deterministic drift.
\end{remark}

\begin{examples}
    Specifying different distributions for $V$ gives rise to various different Gaussian variance-mean mixture variables.
    \begin{itemize}
        \item Variance-Gamma ($V$ is Gamma distributed): We will refer back to this example throughout 
        \item Normal Inverse Gaussian ($V$ is Inverse Gaussian distributed)
        \item Generalised Hyperbolic ($V$ is Generalised Inverse Gaussian distributed)
    \end{itemize}
\end{examples}

In general there are no closed forms for the densities of these distributions, though there are special cases e.g. the Laplace distribution as a special case Variance Gamma distribution (with exponentially distributed $V$ as a special case of the Gamma distribution). In the general case we can condition the density on the distribution of $V$: for a standard Gaussian density $\phi(x)$ and stochastic variance density $f_V(v)$, one writes
\begin{equation}\label{General VG Density}
    f_X(x) = \int_{}^{}\phi\left(\dfrac{x-\theta v}{\sigma\sqrt{v}}\right)f_V(v)dv.
\end{equation}
For example, in the case of the Variance gamma distribution where $V\sim \text{Gamma}(\alpha,\beta)$, this becomes
\begin{equation}
    f_X(x) = \int_{0}^{\infty}\frac{1}{\sqrt{2\pi\sigma^2v}}e^{-(x-\theta v)^2/{(2\sigma^2v)}}\dfrac{\beta^\alpha}{\Gamma(\alpha)}v^{\alpha-1}e^{-\beta v}dv.
\end{equation}
Alternatively, the characteristic function has a much nicer form and, as we will explore, has strong applications.
\begin{prop}\label{aVGcf}
    Let $X = \theta V + \sigma\sqrt{V}Z$ be a Gaussian variance-mean mixture as in Definition \ref{defvariancemeanmix}. Then the characteristic function of the $X$, $\psi_X$ is given by
    \begin{equation}
        \psi_{X}(t) = \mathbb{E}_V\left[e^{(it\theta- t^2\sigma^2/2)V}\right].
    \end{equation}
\end{prop}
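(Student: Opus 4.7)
The plan is to compute $\psi_X(t) = \mathbb{E}[e^{itX}]$ by conditioning on $V$ and exploiting the fact that $V$ and $Z$ are independent. First I would write
\[
\psi_X(t) \;=\; \mathbb{E}\bigl[e^{it(\theta V + \sigma\sqrt{V} Z)}\bigr] \;=\; \mathbb{E}_V\!\left[\mathbb{E}\bigl[e^{it(\theta V + \sigma\sqrt{V} Z)} \,\big|\, V\bigr]\right],
\]
using the tower property. Inside the inner expectation, $V$ is fixed so $e^{it\theta V}$ is constant and can be pulled out, leaving $e^{it\theta V}\cdot \mathbb{E}[e^{i(t\sigma\sqrt{V})Z}\mid V]$.

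Next I would invoke independence of $Z$ and $V$ to drop the conditioning on the remaining factor, reducing it to the ordinary characteristic function of a standard Gaussian evaluated at $s = t\sigma\sqrt{V}$, which is $e^{-s^2/2} = e^{-t^2\sigma^2 V/2}$. Combining the two exponentials under the outer expectation yields
\[
\psi_X(t) \;=\; \mathbb{E}_V\!\left[e^{it\theta V}\, e^{-t^2\sigma^2 V/2}\right] \;=\; \mathbb{E}_V\!\left[e^{(it\theta - t^2\sigma^2/2) V}\right],
\]
which is the claimed identity. Equivalently one can observe that conditional on $V=v$ the random variable $X$ is $N(\theta v, \sigma^2 v)$, whose characteristic function is $\exp(it\theta v - t^2\sigma^2 v/2)$, and then integrate against the law of $V$.

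There is no real obstacle here: the argument is a direct application of Fubini/tower combined with the Gaussian characteristic function. The only points worth flagging carefully are (i) that the interchange of expectation and integration is legitimate because $|e^{itX}|\le 1$, so dominated convergence applies, and (ii) that the independence of $V$ and $Z$ is what allows the inner expectation to collapse to the standard Gaussian characteristic function with deterministic argument $t\sigma\sqrt{V}$. This second point is the one worth highlighting in the write-up, since independence is the structural assumption built into Definition \ref{defvariancemeanmix}.
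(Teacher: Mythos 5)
Your proposal is correct and follows essentially the same route as the paper's proof: condition on $V$ via the tower property, use independence of $Z$ and $V$ to reduce the inner expectation to the standard Gaussian characteristic function at $t\sigma\sqrt{V}$, and recombine the exponentials. The additional remarks on boundedness of $|e^{itX}|$ and the role of independence are sound but not needed beyond what the paper already records.
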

\begin{remark}\label{conditioning}
    Before providing the proof, we reflect on the Gaussian variance-mixture characterisation of $X$. When encountering such a mixture, we can first condition on the random variance to reduce the setup to Normality - something we understand well - then deal with the random variance separately. We will return to this idea throughout, later motivating the definitions of the variance-mixing transform and time-subordinator transform.
\end{remark}
\begin{proof}\label{VGcf}
   By conditioning on the value of $V$, we find for $t\in\mathbb{R}$,
   \begin{align}
       \psi_X(t) &= \mathbb{E}\left[e^{itX}\right] = \mathbb{E}\left[e^{it(\theta V+\sigma\sqrt{V}Z)}\right]=\mathbb{E}_V\left[\mathbb{E}_Z\left[e^{it(\theta V+\sigma\sqrt{V}Z)}\hspace{0.1cm}\Big\vert V\right]\right] \\
       &= \mathbb{E}_V\left[e^{it\theta V}\psi_Z\left(t\sigma\sqrt{V}\right)\right] = \mathbb{E}_V\left[e^{it\theta V- t^2\sigma^2V/2}\right] = \mathbb{E}_V\left[e^{(it\theta- t^2\sigma^2/2)V}\right]
   \end{align}
   where the last line follows from recalling that the characteristic function of a standard Gaussian is given by $\psi_Z(t) = e^{-t^2/2}$.
\end{proof}
\begin{cor}
    In the case that $V\sim \text{Gamma}(\alpha, \beta)$, we can complete the calculation to find
    \begin{align}
        \mathbb{E}_V\left[e^{(it\theta- t^2\sigma^2/2)V}\right] &=\int_{0}^\infty e^{(it\theta- t^2\sigma^2/2)v}\dfrac{\beta^\alpha}{\Gamma(\alpha)}v^{\alpha-1}e^{-\beta v}dv = \int_{0}^\infty \dfrac{\beta^\alpha}{\Gamma(\alpha)}v^{\alpha-1}e^{-(\beta-it\theta+t^2\sigma^2/2)v} dv,\label{useV}
    \end{align}
    after which we apply the definition of the Gamma function (noting $\beta+t^2\sigma^2/2>0$),
    \begin{equation}
        \int_{0}^\infty\dfrac{\beta^\alpha}{\Gamma(\alpha)}v^{\alpha-1}e^{-(\beta-it\theta+t^2\sigma^2/2)v} dv = \dfrac{\beta^\alpha}{\Gamma(\alpha)}\dfrac{\Gamma(\alpha)}{\left(\beta-it\theta+t^2\sigma^2/2\right)^\alpha} = \left(1 - \frac{i\theta t}{\beta} + \dfrac{t^2\sigma^2}{2\beta}\right)^{-\alpha}.
    \end{equation}
\end{cor}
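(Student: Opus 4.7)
The plan is to start from Proposition \ref{aVGcf}, which gives $\psi_X(t) = \mathbb{E}_V[e^{(it\theta - t^2\sigma^2/2)V}]$, and simply substitute the Gamma$(\alpha,\beta)$ density $f_V(v) = \beta^\alpha v^{\alpha-1} e^{-\beta v}/\Gamma(\alpha)$ on $v>0$. This turns the expectation into a single integral over $(0,\infty)$, which after merging the two exponentials reads
\begin{equation*}
\int_0^\infty \frac{\beta^\alpha}{\Gamma(\alpha)}\, v^{\alpha-1} e^{-cv}\,dv, \qquad c := \beta - it\theta + \tfrac{t^2\sigma^2}{2}.
\end{equation*}

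The next step is to evaluate this by recognising it as a Gamma integral with complex parameter. For real $c>0$ one has the standard identity $\int_0^\infty v^{\alpha-1} e^{-cv}\,dv = \Gamma(\alpha)/c^\alpha$, and what is genuinely needed here is its extension to complex $c$ with $\operatorname{Re}(c)>0$, interpreting $c^\alpha$ via the principal branch. This is the only real technical point. Both sides are holomorphic in $c$ on the open right half-plane (the integral converges absolutely since $|e^{-cv}| = e^{-\operatorname{Re}(c)v}$ and $v^{\alpha-1}$ is integrable at $0$ for $\alpha>0$), they agree on the positive real axis, so they agree on the half-plane by the identity theorem. In our setting $\operatorname{Re}(c) = \beta + t^2\sigma^2/2 > 0$, which is exactly the parenthetical remark in the statement, so the identity applies.

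Plugging in and cancelling $\Gamma(\alpha)$ against $1/\Gamma(\alpha)$ gives $\beta^\alpha/(\beta - it\theta + t^2\sigma^2/2)^\alpha$, and factoring $\beta^\alpha$ out of the denominator yields the claimed form $(1 - i\theta t/\beta + t^2\sigma^2/(2\beta))^{-\alpha}$. The analytic-continuation justification is the only step that requires care; everything else is a routine rearrangement. An alternative packaging would be to quote the moment generating function of the Gamma distribution, $\mathbb{E}[e^{sV}] = (1 - s/\beta)^{-\alpha}$ for $\operatorname{Re}(s)<\beta$, evaluated at $s = it\theta - t^2\sigma^2/2$ (which indeed satisfies $\operatorname{Re}(s) = -t^2\sigma^2/2 < \beta$), but this merely repackages the same holomorphic-extension argument.
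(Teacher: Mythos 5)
Your proposal is correct and follows essentially the same route as the paper: substitute the Gamma density into the expectation from Proposition \ref{aVGcf}, merge the exponentials, and evaluate via the Gamma integral with complex parameter of positive real part. The only difference is that you explicitly justify the extension of $\int_0^\infty v^{\alpha-1}e^{-cv}\,dv = \Gamma(\alpha)/c^\alpha$ to complex $c$ with $\operatorname{Re}(c)>0$ via holomorphy and the identity theorem, a step the paper passes over with the parenthetical remark $\beta+t^2\sigma^2/2>0$.
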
 

\begin{remark}\label{chartochar}
    From Proposition \ref{aVGcf}, we can find a useful relation between the characteristic functions of $X$ and $V$ above. This provides potential to find a general reverse direction: in particular, from a distribution $X$, we may find a $V$ satisfying $X\overset{d}{=}\theta V + \sqrt{V}Z$ (fixing $\sigma=1$ in accordance with remark \ref{SVGdefremark}).

    If we wish to consider the class of Gaussian variance-mean mixtures, it is natural to ask how the properties of the mixing distribution affect the full distribution. 
\end{remark}
Suppose the distribution of $X$ is known. Consider defining the distribution of $V$ implicitly through $X {d \over =} \theta V + \sqrt{V} Z,$ with $V$ and $Z$ independent. Proposition \ref{aVGcf} yields a candidate for the characteristic function of $V$ which could specify distribution of $V$ by the Levi continuity theorem for characteristic functions. 

Define the characteristic function of $V$ by $\psi_V(\omega) = \mathbb{E}[e^{i \omega V}]$, $\omega \in  R.$  For fixed $\omega_0\in\mathbb{R}$ consider informally finding the argument $t_0\in\mathbb{C}$ of $\psi_X(t)$ for which
"$\psi_V(\omega_0) = \psi_X(t_0)$".  For this to hold we must have
$(i\omega_0) = (i t \theta - t^2/2)$, a quadratic equation for $t$ with solutions:
$$t_0 = i \theta  \pm \sqrt{ (-1) ( \theta^2 + 2 i\omega_0) } = i\left( \theta \pm \sqrt{ \theta^2 + 2 i \omega_0}\right).$$
From Proposition \ref{aVGcf}, we have that $\psi_X(t) = \mathbb{E}[e^{itX}] = \mathbb{E}[e^{(it \theta - t^2 /2) V} ]$. We can then write
\begin{align}
    \psi_V(\omega) &= \mathbb{E}[e^{i\omega V}] = \mathbb{E}\left[e^{(-\theta+\sqrt{\theta^2+2i\omega})X}\right], \hspace{0.5cm}\omega\in\mathbb{R} \label{VcharX}
\end{align}
Either solution works and we consider using the principal root for $\sqrt{\theta^2 + 2 i \omega_0}$.  This equality is intuitive but not immediate as the domain of $\psi_X(\cdot)$ as a characteristic function is $\mathbb{R}$, whereas $t_0$ is complex - it remains to justify the existence of this expectation in (\ref{VcharX}). A proof of the existence of this expectation is given in Proposition \ref{existenceprop} in Section \ref{SectionTSD}. 

If we want to calculate a density $f_V(v)$ for $V$ from $f_X(x)$, we can compute $\psi_V(t)$ through relation (\ref{VcharX}) and then apply the characteristic function inversion formula,
\begin{equation}
    f_Y(x) = \psi^{-1}\left\{\psi_Y(\omega)\right\}(x) = \frac{1}{2\pi}\lim_{R\to\infty}\int_{-R}^{R}e^{-\omega^2/(2R^2)}e^{-i\omega x} \psi_Y(\omega)d\omega,
\end{equation}
which can be simplified in the case that $\psi_Y$ is itself $\mathcal{L}^1$ integrable or $f_Y$ is sufficiently smooth (i.e. satisfies the Dini criterion, see Katznelson \cite{Katznelson} Thm 2.5 for details),  
\begin{equation}
    f_Y(x) = \psi^{-1}\left\{\psi_Y(\omega)\right\}(x) = \frac{1}{2\pi}\lim_{R\to\infty}\int_{-R}^{R}e^{-i\omega x} \psi_Y(\omega)d\omega.
\end{equation}
The method outlines the formulation of an appropriate transform which encompasses the composition mapping $f_X(x)\mapsto \psi_X(\omega)\mapsto \psi_V(\omega)\mapsto f_v(x)$ .
\begin{definition}[Variance-mixing transform]\label{variancemixingtransform}
    Let $X$ be a Gaussian variance-mixture such that $X = \theta V+ \sqrt{V}Z$ for $\theta\in\mathbb{R}$. We define the generalised variance-mixing transform $\mathcal{V}^\theta[X]$ as
    \begin{equation}\label{ShenoyTransformTheta}
    \mathcal{V}^\theta\left[X\right](\xi) :=
    \psi^{-1}\left\{\mathbb{E}\left[e^{\left(-\theta+\sqrt{\theta^2+2i\omega}\right)X}\right]\right\}(\xi)
    \end{equation}
    In particular, $\mathcal{V}^\theta\left[X\right]$ gives the density of the random variable $V$ in the mixture.
\end{definition}
\begin{remark}\label{remarkTransform1}
    Employing $\mathcal{V}^\theta[X]$ empirically enables one to infer the stochastic volatility distribution $V$ directly from the log-return distribution $X$: rather than simply testing different distributions for $V$ and fitting to the data, we can directly infer $V$ from the price observations $X$ via a semi-parametric approach which does not rely on the quadratic variation.
    
    To calculate an empirical density for $f_V(v)$ based on a sample of $X$, one can map the sample onto the (empirical) characteristic function of $V$ through equation (\ref{VcharX}) and then apply a DFT to invert this characteristic function back into real space (for a review of empirical characteristic functions, the reader is directed towards Yu \cite{ECFYU})
    
    For the empirical transform $\mathcal{V}^{\theta}[X]$, $\theta$ is a hyper-parameter - one can scan a range of $\theta$ values, infer the distribution of $V$ and optimise the value of $\theta$ using the reconstruction error as the optimisation criterion. Alternatively, one could explore suitable values of $\theta$ such as $\theta = 0$ (so that the transform infers $V$ through the symmetric case $X = \sqrt{V}Z$). 
    We carry out the calculation of $\mathcal{V}^0[X]$ for S$\&$P500 log-returns data (January 2022 to January 2024) showing the results in Figure \ref{ecffigure}. The resulting density function appears to be relatively consistent with a Gamma distribution, and justifies the consideration of the Gamma distribution as our example stochastic variance $V$.

    It is possible that a further study can be carried out by tuning $\theta$ appropriately - this would be particularly valuable for cases when the skewness is significantly different from $0$, and one cannot reasonably assume a symmetric log-return distribution.
\end{remark}
\begin{figure}[h!]
    \centering
    \subfloat[\centering]{{\includegraphics[width=0.45\linewidth]{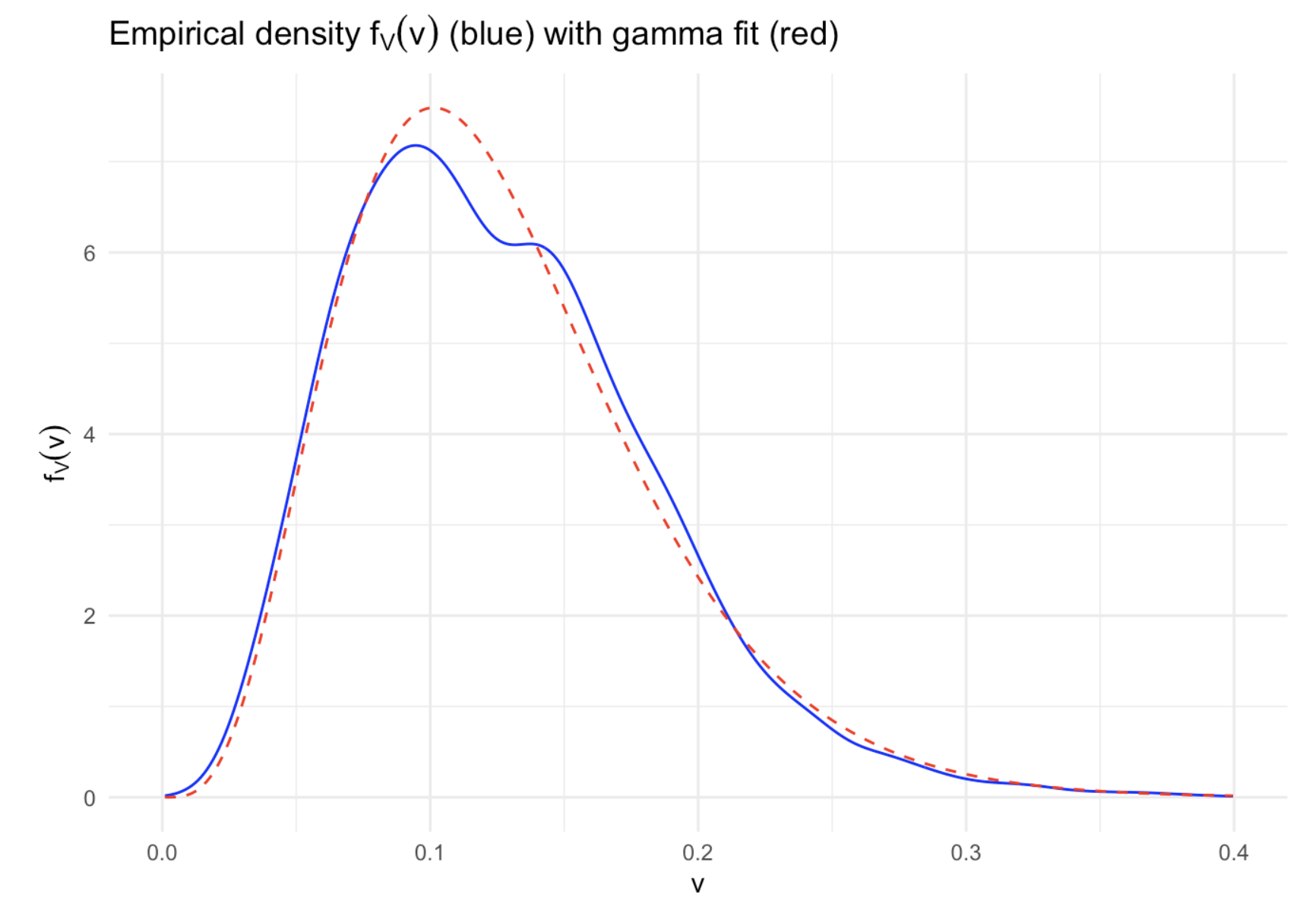}}}
    \qquad
    \subfloat[\centering]{{\includegraphics[width=0.45\linewidth]{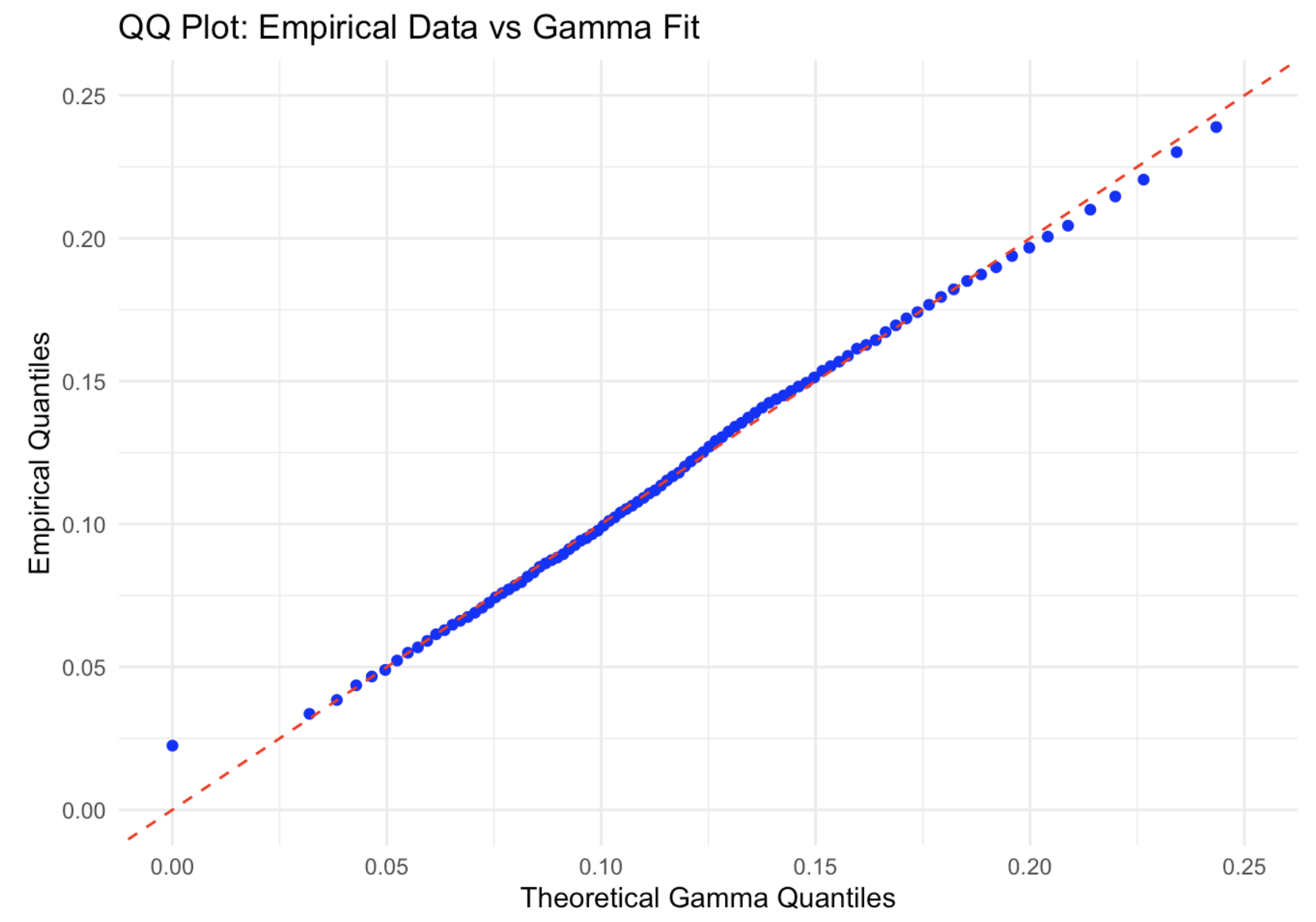} }}
    \caption{Gamma distribution fit for the stochastic variance $V = \mathcal{V}^0[X]$ for S$\&$P500 daily log returns data $X$ (January 2022 to January 2024)}
    \label{ecffigure}
\end{figure}

\section{Time-Subordinated Brownian Motion Processes}\label{SectionTSBM}

\label{SectionVGProcess}
We have characterised Gaussian variance-mean mixture distributions and now turn to the associated time-subordinated Brownian motion processes. This will follow a similar analysis, considering the Lévy process as a composition mixture of a Gaussian process (a Brownian motion) and a subordinator process as described in Section 1.3 of Applebaum \cite{Applebaum}. First, we clarify our definition of a Brownian motion, as given in 1.2 of Pitman and Yor \cite{pitmanyor}.
\begin{definition}[Brownian motion]\label{Brownian}
    Let $(W_t)_{t\geq0}$ be a standard Wiener process where $W_0 = 0$ a.s. , $W_t$ has stationary Gaussian increments $W_{t+h} - W_t \sim N(0, h)$ where the distributions are independent over non-overlapping time intervals, and $W_t$ is continuous $a.s.$ . Then the process $(b_t)_{t\geq0}$ given by
    \begin{equation}\label{Brownian motion}
        b_t = b(t;\theta, \sigma) = \theta t + \sigma W_t
    \end{equation}
    is a Brownian motion with drift $\theta\in\mathbb{R}$ and volatility $\sigma>0$.
\end{definition}
\noindent It is useful to recall the characteristic function for a Brownian motion,
\begin{equation}\label{bmcf}
    \psi_{b(t;\theta,\sigma)}(u) = \exp\left(i\theta tu -\frac{1}{2}\sigma^2t u^2 \right).
\end{equation}
We a characterised variance-mean mixture as a Gaussian with a stochastic mean and variance. In a Brownian motion, the increments have a mean and variance proportional to the size of the time increment, so to create an analogue we seek a way of evaluating the Brownian motion at stochastic, $V$ distributed time-increments - this is known as a \textit{time-subordinated} Brownian motion.
\begin{definition}[Subordinator process]\label{Gamprocess}
    Suppose $V$ is an infinitely divisible non-negative distribution. Let $(\tau_t)_{t\geq0}$ be a subordinator process with increments $V_h$. Then $\tau_0 = 0$ $a.s.$ , $\tau_t$ has stationary $V$ distributed increments $\tau_{t+h}-\tau_t\sim V_h$ where the distributions are independent over non-overlapping time intervals, and $\tau_t$ is continuous a.s. .
\end{definition}

\begin{example}
We can specify the Gamma process $(\gamma_t)_{t\geq0}$. It is useful to employ the parametrisation of the Gamma Process in terms of the mean and variance $\mu = \alpha/\beta, \nu = \alpha/\beta^2$ resp. so that we write $\text{Gamma}(\alpha,\beta)\sim\Gamma_{\mu,\nu}$. The Gamma process then has increments $\gamma_{t+h}-\gamma_t\sim \Gamma_{\mu h,\nu h}$ so that the cumulative value at time $t=T$ has distribution 
$\gamma_T \sim \Gamma_{\mu T, \nu T}$.
\end{example}

Now, we specify $V$ so that $\mathbb{E}[\tau_t] = t$. Recall from remark \ref{SVGdefremark} that this does not cause a loss of generality in characterising time-subordinated processes below. This is a helpful characterisation as it means that at time $T$, the expectation of our random Gamma distributed time $\mathbb{E}[\tau_T] = T$ and we have random fluctuation only about the deterministic time.


We can finally characterise time-subordinated Brownian processes as a composition of a Brownian motion and a subordinator process.
\begin{definition}[Time-subordinated Brownian motion]
    Let $(b_t)_{t\geq0} = b(t;\theta, \sigma)$ be a Brownian motion as in Definition \ref{Brownian},  and let $(\tau_t)_{t\geq0}$ be an independent subordinator process as in Definition \ref{Gamprocess}. Then the time-subordinated process $(X_t)_{t\geq0}$ given by the composition
    \begin{align}
        X_t = b_{\tau_t}
        &= \theta \tau_t + \sigma W_{\tau_t},\label{thetagammasigmawienergamma}
    \end{align}
    is known as a time-subordinated Brownian motion process.
\end{definition}
\begin{remark}
    Breaking down this definition, there are two sources of randomness. First is the Wiener process which acts the same as in the Brownian motion, creating Gaussian noise about the drift $\theta t$. However, there is also the random time-subordination, so that we think of the process as first randomly mapping deterministic time onto a subordinated time and then evaluating the Brownian motion at the time change. 
    Note that one could equivalently understand the process as observing a Brownian motion at randomly distributed times, those times given by $\tau_t$. 
    For a more detailed review of time-subordinated processes, see section 1.3 of Applebaum \cite{Applebaum}.
    
    Figure \ref{VGcomponents} illustrates a simulation
    \footnote{ 
        $10^6$ points used to simulate $b(t)$ and $\tau(t)$ over 10 units of time i.e. a discretization grid $\mathbb{T}$ with $\Delta t = 10^{-5}$. 
        
        $b(\tau(t))$ is evaluated by interpolation: first evaluate $\tau(t)$ for each point $t\in \mathbb{T}$, $\tau(t)$ then lies between $t_-, t_+\in \mathbb{T}$ and $b(\tau(t))$ is given via interpolation between $b(t_-)$ and $b(t_+)$.
    } 
    of a time-subordinated Brownian motion - specifically a Variance Gamma process. The original realisation of the Brownian motion $b(t)$ is given in the top left, and the realisation of an independent Gamma process $\tau(t)$ is given in the top right. The Subordinated Brownian motion in the two lower sub-figures is given by evaluating the original motion with the time change i.e. $(t,b(t)) \mapsto (t, b(\tau(t)))$.

    \begin{itemize}
        \item The shading on the left plots demonstrate the effect of the time-subordination. Each band in the Brownian motion plot is mapped onto the corresponding band in the Subordinated motion plot through the time change (i.e. the endpoints of the bands in the top plot are mapped onto the endpoints of the bands on the bottom plot via $t\mapsto \tau(t)$).
        \item While the general shape of the motion is mostly preserved, one observes the dilation and contraction of the Brownian motion along the time-axis caused by the time-subordination.
        \item The colour on the right plots demonstrate the `speed' of the time change over different increments. This is calculated as the mean rate of the Gamma process realised over the increment (i.e. the ratio of the gamma increment to the length of the increment). `Speed' 1 indicates there is no time change on average, `speed' 2 indicates that the time change moves twice as fast as normal time, `speed' 0.5 indicates that the time change is half as fast.
        \item We see that periods of extreme volatility occur where there is a contraction in the time axis and a higher `speed' of time change (in red). Alternatively, periods of lower volatility correspond to instances of time dilation and a slower `speed' of time change (in blue)
    \end{itemize}

    \begin{figure}[h!]
        \centering
        \includegraphics[width = \linewidth]{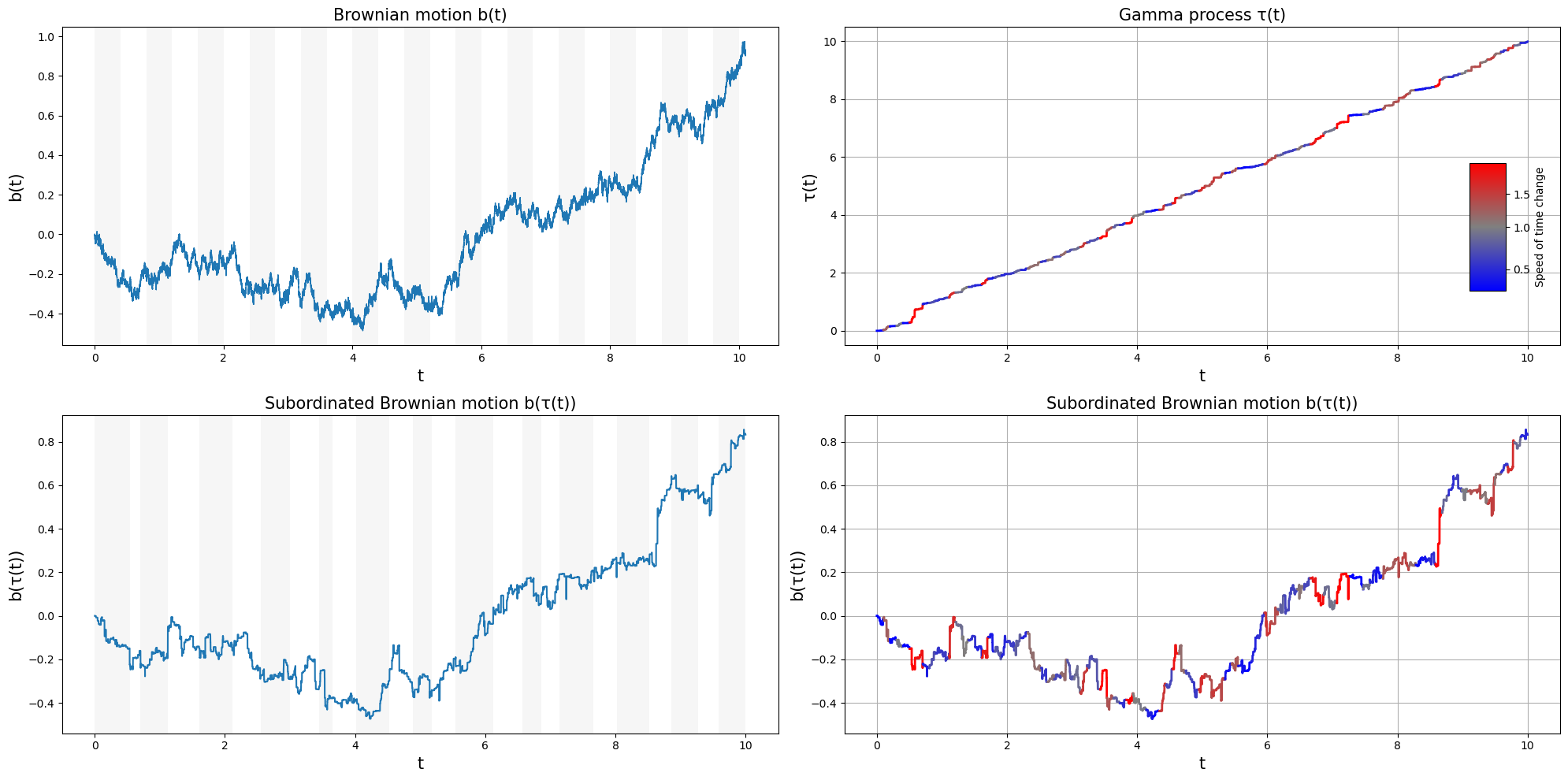}
        \caption{Realised components of a Time-subordinated Brownian motion process: $VG(t;\theta = 0.1, \sigma = 0.05, \nu = 0.05)$}
        \label{VGcomponents}
    \end{figure}
\end{remark}

We can obtain the density function of the process at time $t$ by conditioning on a realisation of the time-subordinator (giving  a Brownian motion density), and integrating against the subordinator density. Similar to (\ref{General VG Density}), we have
\begin{equation}
     f_{X_t}(x) = \int\phi\left(\dfrac{x-\theta \xi}{\sigma\sqrt{\xi}}\right)f_{\tau_t}(\xi)d\xi.
\end{equation}
We may also obtain the characteristic function for the process. Again, we condition on realisation of the time-subordinator with a similar method to the proof of Prop. (\ref{VGcf}), using the Brownian motion characteristic function (\ref{bmcf}). We then find
\begin{equation}\label{VGc.f.}
    \psi_{X_t}(u) = \mathbb{E}_{\tau_t}\left[e^{(iu\theta- u^2\sigma^2/2)\tau_t}\right].
\end{equation}

\begin{example}

Specifying the Gamma subordinator process we have the pair,
\begin{align}
    f_{X_t}(x) &= \int_{0}^\infty\dfrac{1}{\sqrt{2\pi\sigma^2g}}\exp\left(-\frac{(x-\theta g)^2}{2\sigma^2g}\right)\dfrac{g^{\frac{t}{\nu}-1}\exp\left(-\frac{g}{\nu}\right)}{\nu^{\frac{t}{\nu}}\Gamma\left(\frac{t}{\nu}\right)}dg\\
    \psi_{X_t}(u) &= \left({1-i\theta\nu u + \frac{\sigma^2}{2}\nu u^2}\right)^{-\frac{t}{\nu}}.
\end{align}

Taking the limit as $\nu\downarrow0$ in the $VG$ process, we obtain weak convergence to a Brownian motion $b(t;\theta,\sigma)$ as in (\ref{Brownian motion}) - one concludes this by observing the characteristic function above converges to that of a Brownian motion and applying Lévy's continuity Theorem. This should be expected, as the Gamma process subordinating the time becomes degenerate with $0$ variance about the mean $t$, and there is no time-change for the Brownian motion. This shows that the $VG$ model still contains Brownian motion as a sub-model - this is a particularly useful property if one employs a hypothesis test which nests a Brownian motion model null inside a more general $VG$ model alternative, observing whether $\nu$ is significantly greater than $0$. This result can similarly be generalised for other subordinator processes parametrised in terms of their variance $\nu$, and taking the limit $\nu\to 0$ appropriately.

One could compute the central moments of the $VG$ process from the characteristic function (\ref{VGc.f.}); however, employing the form (\ref{thetagammasigmawienergamma}) simplifies calculations. Conditioning on the Gamma time-subordinator as $\gamma_t = g$, the conditional $VG$ process is a Brownian motion and we write
\begin{equation}
    X\left(t;\theta, \sigma \big\vert \gamma_t=g\right) = \theta g +\sigma W_g.
\end{equation}
We can then take the expectation over the Wiener process, and then the expectation over $g=\gamma_t$ as in A4 of Madan et al. \cite{MadanCarrChang}). 
We obtain,
\begin{align*}
    &\mathbb{E}\left[X_t\right] = \theta t, \hspace{0.2cm}\\&\mathbb{E}\left[\left(X_t-\mathbb{E}[X_t]\right)^2\right] = (\theta^2\nu+\sigma^2)t, \hspace{0.2cm}\\&\mathbb{E}\left[\left(X_t-\mathbb{E}[X_t]\right)^3\right] = (2\theta^3\nu^2+3\sigma^2\theta\nu)t,\\ &\mathbb{E}\left[\left(X_t-\mathbb{E}[X_t]\right)^4\right] = (3\sigma^4\nu+12\sigma^2\theta^2\nu^2 + 6\theta^4\nu^3)t + (3\sigma^4+6\sigma^2\theta^2\nu+3\theta^4\nu^2)t^2.
\end{align*}
If we consider the case $\theta = 0$, then we have no skewness as the third central moment becomes $0$. Furthermore, the kurtosis would then be given by $3(1+\nu/t)$ so that $\nu$ represents the percentage excess kurtosis over the Brownian motion process over a unit time increment (where over longer time increments the excess kurtosis decreases to $0$ linearly in $t$ - an expected result considering the central limit theorem). 

However, for $\theta\neq0$ the variance of the process is higher than that of the Brownian motion - this increase in average volatility stems from additional variance contributions from the random drift component under the time-subordination (being deterministic, the drift component in a Brownian motion does not contribute to the variance). This is key to differences in option prices under a regular GBM model and a time-subordinated model - in particular, long volatility options are often priced higher under a time-subordinated model.

\end{example}

\section{Time-Change Decomposition}\label{SectionTSD}
\label{sectionTST}
In a similar manner to the variance-mixing transform through which one can transform a Gaussian mean-variance mixture $X = \theta V + \sqrt{V}Z$ into its stochastic variance $V$, one can transform a subordinated Brownian motion directly into its subordinating time-change process. This follows the same procedure as the variance-mixture transform in Definition \ref{variancemixingtransform}. Here, we observe the evolution of the distribution for the subordinating time-change process over time-increments of varying size.

Suppose we have a time-changed Brownian motion $(X_t)_{t\geq0}$ such that $\forall t\geq0$, $X_t = \theta \tau_t + W_{\tau_t}$ for some $\theta\in\mathbb{R}$ where $(W_t)_{t\geq0}$ is a standard Wiener process and $(\tau_t)_{t\geq0}$ is the independent time-change. If we want to compute the density $f_{\tau}(\xi,t)$ of the time-change, we can use the \textit{time-change transform} defined below. This is motivated by a combination of Definition \ref{variancemixingtransform} and the theory of Laplace exponents for Lévy processes, which take a similar form, characterising a Lévy process in terms of its Laplace transform (see 1.3.2 of Applebaum \cite{Applebaum}).

It is particularly important to highlight case that $\theta = 0$, i.e. the no-drift case where $X_t = W_{\tau_t}$ through the following theorem of Skorokhod \cite{skororussia}, translated in \cite{skorokhod}, which asserts that any probability distribution can be represented as a stopped Brownian motion.

\begin{thm}\label{Skorothm}
    For a given probability measure $\mu$ on $\mathbb{R}$ such that $$\int_\mathbb{R} |x|d\mu(x) < \infty, \hspace{0.5cm} \int_\mathbb{R}xd\mu(x) = 0,$$
    there exists a stopping time $T$ such that $W_T \sim \mu$ and the stopped process $(B_{t\wedge T})_{t\geq0}$ is a uniformly integrable martingale.
\end{thm}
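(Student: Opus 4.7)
The plan is the classical Skorokhod randomization construction. The two-point distributions $\mu_{a,b} := \frac{b}{a+b}\delta_{-a} + \frac{a}{a+b}\delta_b$ (with $a, b > 0$) are embedded by the Brownian exit time $T_{a,b} := \inf\{t \geq 0 : W_t \notin (-a,b)\}$: the stopped process is bounded by $\max(a,b)$, so optional stopping applied to the bounded martingale $(W_{t \wedge T_{a,b}})$ gives both $W_{T_{a,b}} \sim \mu_{a,b}$ (with masses $b/(a+b)$ and $a/(a+b)$ on $-a$ and $b$ respectively) and uniform integrability for free.

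For a general centered $\mu$ with $\int|x|\, d\mu < \infty$, I would construct a pair of nonnegative random variables $(A, B)$, independent of $W$, whose joint law $\nu$ mixes the two-point distributions above so that $\int \mu_{a,b}\, d\nu(a, b) = \mu$. Setting $m := \int_0^\infty x\, d\mu = \int_{-\infty}^0 |x|\, d\mu$ (equal and finite by the zero-mean hypothesis), I would couple $(A, B)$ through a single uniform random variable driving the quantile functions of the positive and negative parts $\mu^\pm$ comonotonically. With $T := T_{A,B}$, conditioning on $(A,B)$ and invoking Step 1 recovers $W_T \sim \mu$ via a change-of-variables calculation in which the zero-mean condition is exactly what makes the marginal probabilities match on the two sides.

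For uniform integrability, the key pathwise bound is $|W_{t \wedge T}| \leq A \vee B$, so it suffices to exhibit the stopped process as dominated by a single integrable random variable independent of $t$. Under the first-moment assumption alone, however, the naive quantile matching produces $\mathbb{E}[A], \mathbb{E}[B]$ that are second-moment integrals of $\mu$, requiring $\int x^2\, d\mu < \infty$. One must therefore refine the construction: either use the Azéma-Yor embedding, where the stopped supremum $\sup_{s \leq T} W_s$ has a distribution explicitly computable from the barycenter function of $\mu$ and is integrable precisely under $\int |x|\, d\mu < \infty$; or pass to Dubins' binary-splitting construction, defining $T = \lim_n T_n$ where $T_n$ records the time at which $W$ reaches the $n$th-level conditional means of the successive dyadic splits of $\mu$, with $W_{T_n} = \mathbb{E}[W_T \mid \mathcal{F}_{T_n}]$ being a Doob martingale closed by $W_T \in L^1$.

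The main obstacle is uniform integrability under only a first-moment hypothesis: the naive randomization fails, and one needs either an extremal construction (Azéma-Yor, whose running maximum is integrable exactly when $\mu$ is) or Dubins' hierarchical splitting (where the closed-martingale property gives $L^1$ convergence of $W_{T_n} \to W_T$ and hence UI on extending to continuous time). I would use Dubins, since UI is then essentially automatic, at the cost of a more delicate verification of $W_T \sim \mu$ as the almost-sure limit of the conditional means along the dyadic tree.
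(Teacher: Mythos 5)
You should first note that the paper does not prove this statement at all: Theorem \ref{Skorothm} is quoted as a classical result of Skorokhod (1965), with the remark that the many known constructions of $T$ are surveyed by Obłój. There is therefore no in-paper argument to compare against, and your task is really to supply a proof of a deep cited theorem.

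As a proof plan, your sketch identifies the right ingredients but leaves a genuine gap in the middle and defers the hard work at the end. Step 1 (the two-point embedding via the exit time $T_{a,b}$, with optional stopping on the bounded martingale) is correct and complete. Step 2 as written does not work: coupling $A$ and $B$ comonotonically through a single uniform via the quantile functions of $\mu^{\pm}$ does not in general give $\int \mu_{a,b}\,d\nu(a,b)=\mu$, because the weight $b/(a+b)$ attached to the atom at $-a$ depends on the paired value of $b$, so the marginal recovered on each side is distorted by the coupling. The classical Skorokhod randomization instead takes $\nu(da,db)\propto(a+b)\,\mu^{-}(da)\,\mu^{+}(db)$ (product form, not comonotone), for which the identity $\int_0^\infty b\,d\mu^+ = \int_0^\infty a\,d\mu^-$ supplied by the centering hypothesis makes the marginals match; and, as you correctly diagnose, even this gives $\mathbb{E}[A\vee B]<\infty$ only under a second moment, so uniform integrability fails under the stated hypotheses. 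Your pivot to Dubins (or Azéma--Yor) is the right repair, but the proposal stops exactly where the work begins: for Dubins you still owe (a) the verification that the dyadic conditional-mean splittings converge, i.e.\ $T_n\uparrow T<\infty$ a.s.\ and $W_{T_n}\to W_T$ with $W_T\sim\mu$, and (b) the transfer of uniform integrability from the discrete closed martingale $(W_{T_n})_n$ to the continuous-time family $(W_{t\wedge T})_{t\ge0}$, which requires the pathwise sandwich of $W_{t\wedge T}$ between consecutive splitting levels together with a maximal or UI argument. Until those are supplied, this is an outline of a known proof rather than a proof.
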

There is a vast literature of different solutions constructing the stopping time $T$ in the Theorem, as summarised by Obłój \cite{Obloj}. Extending the result, Monroe \cite{Monroe} showed the following key Theorem.
\begin{thm}\label{Monroethm}
    A process $(X_s, \mathcal{F}_s)$ can be embedded in Brownian motion if and only if $(X_s, \mathcal{F}_s)$ is a local semi-martingale. To embed a process in Brownian motion is to find a Wiener process $(W_t, \mathcal{G}_s)$ and an increasing family of $\mathcal{G}_s$ stopping times $T_s$, such that $W_{T_s}$ has the same joint distributions as $X_s$.
\end{thm}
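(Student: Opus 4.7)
The plan is to prove the two directions of the biconditional separately, since only one is substantial. For the easy direction, suppose $X_s = W_{T_s}$ for a Brownian motion $(W_t)$ and an increasing family of stopping times $(T_s)$ in a filtration $(\mathcal{G}_t)$. With the localizing sequence $\sigma_n = \inf\{s : T_s \geq n\}$, the stopped processes $W_{T_{s \wedge \sigma_n}}$ are uniformly integrable martingales with respect to $\mathcal{F}_s = \mathcal{G}_{T_s}$ by the optional sampling theorem, so $X$ is a local martingale and a fortiori a local semi-martingale.

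For the hard direction I would bootstrap from Theorem \ref{Skorothm} via three successive reductions. First, given a discrete-time martingale $(M_n)$ with $M_0 = 0$, build stopping times $0 = T_0 \leq T_1 \leq \cdots$ recursively: having constructed $T_{n-1}$, apply Theorem \ref{Skorothm} conditionally on $\mathcal{F}_{n-1}$ to the conditional law of the increment $M_n - M_{n-1}$, run after time $T_{n-1}$, to obtain $T_n$ such that $(W_{T_k})_{k \leq n}$ agrees in joint distribution with $(M_k)_{k \leq n}$. Second, reduce a general semi-martingale $X$ to the local-martingale case via the Doob--Meyer decomposition $X = M + A$, absorbing the monotone components of the predictable finite-variation process $A$ into the time change itself. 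Third, pass from discrete to continuous time by approximating the local-martingale part $M$ along refining partitions $\pi_k$, applying the first reduction to each discretization $M^{(k)}$, extracting a tight limit of the resulting stopping-time families $(T_n^{(k)})$, and identifying the limit on a dense countable set of indices $s$ via the martingale property of the errors $M - M^{(k)}$.

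The principal obstacle is the third step in the presence of jumps. For continuous local martingales the Dambis--Dubins--Schwarz theorem already provides an explicit embedding via the quadratic variation, $T_s = \langle M \rangle_s$, so nothing new is needed there. The genuine difficulty is the compatibility of jumps in $X$ with the continuity of Brownian paths: between consecutive stopping times $T_{s-}$ and $T_s$ the Brownian motion traces a continuous trajectory whose detailed fluctuations have no direct analogue in $X$, and $(T_s)$ must be constructed so that these auxiliary excursions do not contaminate the marginal laws of $W_{T_s}$ or destroy the martingale property of the embedding. Monroe's original argument handles this via a delicate inductive randomization that enlarges the probability space with independent auxiliary Brownian pieces filling in between embedded jumps, and reconstructing this bookkeeping rigorously is where I would expect the bulk of the work to sit.
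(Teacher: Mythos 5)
First, a point of context: the paper does not prove this statement at all --- Theorem \ref{Monroethm} is quoted verbatim from Monroe's 1978 paper and used as a citation, so there is no in-paper argument to compare yours against. Judged on its own terms, your sketch has two genuine gaps, one in each direction.

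Your ``easy direction'' proves a false statement. An embedded process $W_{T_s}$ need not be a local martingale: take $T_s=\inf\{t: W_t=s\}$, the hitting time of level $s$; these are increasing, a.s.\ finite stopping times, and $W_{T_s}=s$ is a deterministic strictly increasing process, which is a semimartingale but certainly no local martingale. The failure in your argument is that optional sampling requires uniform integrability of the stopped martingale (or bounded stopping times), which fails for hitting times, and the localization $\sigma_n=\inf\{s: T_s\geq n\}$ does not repair this --- in the example above $\mathbb{E}[W_{T_{s\wedge\sigma_n}}]=\mathbb{E}[s\wedge\max_{u\leq n}W_u]$ is strictly increasing in $s$, so the localized process is not a martingale either. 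The correct conclusion is only that $W_{T_s}$ is a semimartingale, and even that rests on the nontrivial fact that the semimartingale property is preserved under time changes by right-continuous increasing families of stopping times; it does not come from optional sampling.

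In the ``hard direction,'' the sketch never actually invokes the semimartingale hypothesis, which is where the entire content of the theorem lives. The discrete-time construction you describe (iterated conditional Skorokhod embeddings for the martingale increments, hitting times for the drift) goes through for essentially \emph{any} adapted discrete-time process, since consecutive hitting times of arbitrary levels are still an increasing family of stopping times. What distinguishes semimartingales is that as the partitions $\pi_k$ refine, the stopping times $T_s^{(k)}$ remain tight and converge to a finite limit; for a non-semimartingale the conditional variation blows up along refining partitions and the embedding times diverge. Your step three asserts ``extracting a tight limit'' without indicating how tightness follows from the semimartingale property (Monroe controls this via a reduction to quasimartingales with bounded conditional variation), and you locate the difficulty instead in the jumps, which is not the essential obstruction. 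Separately, ``absorbing the monotone components of $A$ into the time change'' is not meaningful as stated: the time change is increasing while $A$ is merely of finite variation, and the embedding must reproduce the joint laws of $X$, not merely superimpose a drift.
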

This means that any semi-martingale $X_t$ can be represented as $W_{\tau_t}$ for some increasing process $\tau_t$ \footnote{It is however important to note that the stopping times $T_s$ do \textit{not} necessarily have to correspond to a Lévy subordinator process}. For our study, as explained by Veraart and Winkel \cite{VeraartWinkel}: `In the light of the Fundamental Theorem of Asset Pricing, this means that every arbitrage-free model can be viewed as time-changed Brownian motion' which provides a strong justification for considering this inverse problem for the cases we model the time-change with a subordinator. The transform below represents an alternate characterisation of the subordinator without computing the quadratic variation.

\begin{definition}\label{timesubordinator transform}
    Let $X = (X_t)_{t\geq0}$ be a time-changed Brownian motion where $X_t = \theta \tau_t + W_{\tau_t}$ for $\theta\in\mathbb{R}$. We define the time-change transform $\mathcal{S}^\theta[X]$ as
    \begin{equation}\label{ShenoySubordinatorTransformTheta}
    \mathcal{S}^\theta\left[X\right](\xi,t) :=
    \psi^{-1}\left\{\mathbb{E}\left[e^{\left(-\theta+\sqrt{\theta^2+2i\omega}\right)X_t}\right]\right\}(\xi).
    \end{equation}
    In particular, $\mathcal{S}^\theta\left[X\right]$ gives the density for the subordinator $\tau_t$.
\end{definition}

\begin{remark}
    Fixing the value of $t$, this definition is identical to that of the variance-mixture transform (Definition \ref{variancemixingtransform}): The inner expectation represents relation (\ref{VcharX}) mapping $X_t$ onto the characteristic function of $\tau_t$, while $\psi^{-1}$ represents the characteristic function inversion. The transform encompasses the composition mapping $f_{X}(x,t)\mapsto \psi_{X_t}(u)\mapsto \psi_{\tau_t}(u)\mapsto f_{\tau_t}(\tau)$ as shown in Figure \ref{Compositionmappingfigure} below.
    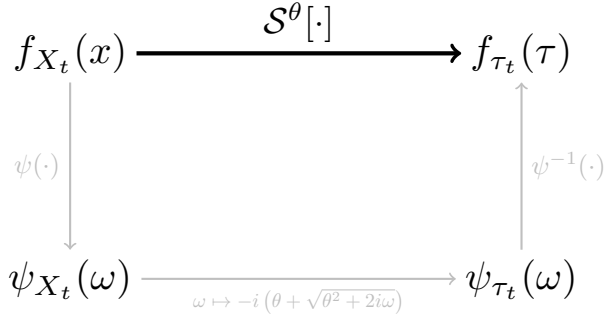
\begin{figure}[h!]
        \centering
        \begin{tikzpicture}[scale=1.5] 
            \node (A) at (0,2) {$\scaleto{f_{X_t}(x)}{15pt}$};
            \node (B) at (0,0) {$\scaleto{\psi_{X_t}(\omega)}{15pt}$};
            \node (C) at (4,0) {$\scaleto{\psi_{\tau_t}(\omega)}{15pt}$};
            \node (D) at (4,2) {$\scaleto{f_{\tau_t}(\tau)}{15pt}$};
        
            \draw[->, thick, draw = lightgray] (A) -- (B) node[midway, left] {$\color{lightgray}\psi(\cdot)$};
            \draw[->, thick, draw = lightgray] (B) -- (C) node[midway, below] {\scalebox{0.75}{$\color{lightgray} \omega\mapsto -i\left(\theta + \sqrt{\theta^2+2i\omega}\right)$}};
            \draw[->, thick, draw = lightgray] (C) -- (D) node[midway, right] {$\color{lightgray} \psi^{-1}(\cdot)$};
            \draw[<-, ultra thick, , draw = black] (D) -- (A) node[midway, above] {$\scaleto{\mathcal{S}^{\theta}[\cdot]}{15pt}$};

        \end{tikzpicture}
        \caption{Components of $\mathcal{S}^{\theta}[\cdot]$}
        \label{Compositionmappingfigure}
\end{figure}
    As discussed above, this allows us to transform the full process $(X_t)_{t\geq0}$ directly into its subordinator process $(\tau_t)_{t\geq0}$ via a semi-parametric approach. This transform allows one to explore properties of the hidden subordinating process $(\gamma_t)_{t\geq0}$ by observing only the price evolution process $(X_t)_{t\geq0}$. In effect, one reduces the study of the composition of two random objects (the Brownian motion and the subordinator) into the study of just one (the subordinator).
\end{remark}

The reader may question the existence of the transform $\mathcal{S}^\theta\left[X\right]$ as this is not an obvious result. The central issue lies in verifying that the expectation (\ref{verifythisexpectation}) below exists - this will be resolved by employing a conditional expectation and extending the notion of the characteristic function to have a complex domain as in Lukacs \cite{Lukacs}.

\begin{prop}\label{existenceprop}
    Suppose $X = \theta V + \sqrt{V}Z$ is a Gaussian variance-mean mixture. Then the transform $\mathcal{V}^\theta[X]$ given by
    \begin{equation}\label{variancemixtureverification}
    \mathcal{V}^\theta\left[X\right](\xi) :=
    \psi^{-1}\left\{\mathbb{E}\left[e^{\left(-\theta+\sqrt{\theta^2+2i\omega}\right)X}\right]\right\}(\xi)
    \end{equation}
    exists and equals $f_V(\xi)$, the density of $V$.
\end{prop}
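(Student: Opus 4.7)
The plan is to interpret the complex expectation via iterated conditioning on $V$, compute it explicitly using the analytic continuation of the Gaussian moment generating function, show that the result coincides with $\psi_V(\omega)$, and then invoke standard Fourier inversion.

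First I would set $c = c(\omega) := -\theta + \sqrt{\theta^2 + 2i\omega}$, with the principal branch of the square root, so that $c$ is a well-defined complex number for every $\omega \in \mathbb{R}$. The key algebraic identity is
\begin{equation}
    c\theta + \tfrac{1}{2}c^2 \;=\; i\omega,
\end{equation}
which one checks by writing $w = \sqrt{\theta^2 + 2i\omega}$, so that $c = -\theta + w$ and $w^2 = \theta^2 + 2i\omega$, and then expanding. This identity is precisely the content of the construction of $c$ as the root of the quadratic $it\theta - t^2/2 = i\omega$ sketched informally in the discussion preceding Definition \ref{variancemixingtransform}.

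Next I would address the principal difficulty, namely the existence of $\mathbb{E}\bigl[e^{cX}\bigr]$. Because $\mathrm{Re}(c)$ may be strictly positive, the random variable $e^{cX}$ is generally not integrable in the naive sense $\mathbb{E}\bigl[|e^{cX}|\bigr] < \infty$, so a direct Lebesgue interpretation fails. The remedy, in the spirit of the complex extension of characteristic functions in Lukacs \cite{Lukacs}, is to define the expectation by iterated integration: condition first on $V$. Given $V$, the random variable $X = \theta V + \sigma\sqrt{V}Z$ is Gaussian with mean $\theta V$ and variance $V$ (taking $\sigma = 1$ as in the statement), and the Gaussian moment generating function $s \mapsto \mathbb{E}[e^{sZ}] = e^{s^2/2}$ is entire in $s \in \mathbb{C}$ because of the super-exponential tail decay of $\phi$. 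Therefore
\begin{equation}
    \mathbb{E}\bigl[e^{cX} \,\big|\, V\bigr] \;=\; e^{c\theta V}\,\mathbb{E}\bigl[e^{c\sqrt{V}Z} \,\big|\, V\bigr] \;=\; e^{c\theta V + c^2 V/2} \;=\; e^{i\omega V},
\end{equation}
using the identity above. The outer expectation of $e^{i\omega V}$ is trivially finite because the integrand has modulus one, giving
\begin{equation}
    \mathbb{E}\bigl[e^{cX}\bigr] \;:=\; \mathbb{E}_V\!\left[\mathbb{E}\!\left[e^{cX}\,\big|\,V\right]\right] \;=\; \mathbb{E}_V\bigl[e^{i\omega V}\bigr] \;=\; \psi_V(\omega).
\end{equation}
This simultaneously establishes existence (as an iterated integral, the only sensible interpretation) and identifies the expression as the characteristic function of $V$.

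Finally, since $\psi_V$ is a bona fide characteristic function, the regularized Fourier inversion formula quoted in the paper yields $f_V$ whenever $V$ admits a density (and, more generally, yields the distribution function via the Gil-Pelaez type formulation). Hence
\begin{equation}
    \mathcal{V}^\theta[X](\xi) \;=\; \psi^{-1}\bigl\{\psi_V(\omega)\bigr\}(\xi) \;=\; f_V(\xi),
\end{equation}
as claimed. The main obstacle in the argument is the justification step: one cannot appeal to Fubini's theorem directly because $|e^{cX}|$ need not be jointly integrable; the fix is to work conditionally and exploit the entirety of the Gaussian MGF, which collapses the conditional integral to a bounded function of $V$ before the outer integration.
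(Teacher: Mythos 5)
Your proposal is correct and follows essentially the same route as the paper: condition on $V$, use the analytic continuation of the Gaussian characteristic function (the paper's Lemma \ref{ExtendedGaussianCf}, which you invoke as the entirety of the Gaussian MGF) to collapse the conditional expectation to $e^{i\omega V}$ via the identity $c\theta + c^2/2 = i\omega$, identify the result as $\psi_V(\omega)$, and apply Fourier inversion. Your explicit remark that $\mathbb{E}\bigl[|e^{cX}|\bigr]$ may be infinite and that the expectation must be read as an iterated integral is a slightly more careful treatment of a point the paper relegates to a footnote, but it does not change the argument.
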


We first want to verify the existence of the inner expectation
\begin{equation}\label{verifythisexpectation}
    \mathbb{E}\left[e^{\left(-\theta+\sqrt{\theta^2+2i\omega}\right)X}\right] = \mathbb{E}\left[e^{\left(-\theta+\sqrt{\theta^2+2i\omega}\right)\left(\theta V+\sqrt{V}Z\right)}\right]
\end{equation}
for all $\omega\in\mathbb{R}$, and show that it is equal to $\psi_V(\omega)$ which must exist as the characteristic function of a real random variable $V$. We can then conclude that the transform exists with the characteristic function inverse.

This expectation (\ref{verifythisexpectation}) has a similar form to the characteristic and moment generating functions; however, these cannot be applied directly as their domains only cover the imaginary and real axes respectively, while the necessary domain of integration in (\ref{verifythisexpectation}) is $\mathcal{I} = \{-\theta + \sqrt{\theta^2 + 2i\omega}\hspace{0.1cm}\big\vert \hspace{0.1cm}\omega\in\mathbb{R}\}$. This leads us to consider an \textit{extended} characteristic function which extends the domain of the characteristic function into the complex plane. For a more detailed review of analytic characteristic functions and their applications, the reader is directed to Chapter 7 of Lukacs \cite{Lukacs}.

\begin{definition}
    Let $Y$ be a real-valued random variable with density $f_Y(y)$, and $\mathcal{D}\supset\mathbb{R}$ be a region in the complex plane containing the real line such that the Fourier integral
    \begin{equation}\label{FourierIntegral}
        \Psi_Y(z) = \int_{\mathbb{R}}e^{iyz}f_Y(y)dy
    \end{equation}
    is analytic in $\mathcal{D}$. Then $\Psi_Y$ is the analytic continuation of $\psi_Y$ in the domain $\mathcal{D}$.
\end{definition}
In general, it is not possible to analytically extend the characteristic function to the entire complex plane. Indeed, any distribution with a moment-generating function which is undefined for some parts of the real line (such as the Variance-Gamma) cannot be extended to the entire complex plane, as there will be a point on the imaginary axis where the Fourier integral is undefined \footnote{As a simple example, the standard Laplace distribution (as a special case of the Variance Gamma) has density $f_L(x) = e^{-|x|}/2$ and characteristic function $\psi_L(z) = (1+z^2)^{-1}$ which has singularities at $\pm i$, and so there is no analytic continuation to the entire plane.}. To this end, Lukacs shows that analytic characteristic functions are regular in a horizontal strip (the strip of regularity), where there are singularities on the intersection of the boundary and the imaginary axis (though in some cases this strip is the entire plane, and there are no singularities). The characteristic function is represented by the integral (\ref{FourierIntegral}) inside this domain. Lukacs remarks that analytic characteristic functions can often be continued analytically beyond this strip of regularity if the singularities are suitably avoided, as we will require for our transform - we show that the Fourier integral \textit{is} is well-defined on $\mathcal{I} = \{-\theta + \sqrt{\theta^2 + 2i\omega}\hspace{0.1cm}\big\vert \hspace{0.1cm}\omega\in\mathbb{R}\}$ as well as $\mathbb{R}$.

Lukacs also shows that for distributions with moment-generating functions defined on the entire real line, the characteristic function can be extended to the entire complex plane. In particular, the Gaussian has this property -  we show explicitly the existence of an analytic continuation of the Gaussian characteristic function in the lemma below. This provides a particularly useful step towards our goal.

\begin{lemma}\label{ExtendedGaussianCf}
    Let $Z$ be a standard Gaussian. Then $\Psi_Z$ is entire in the complex plane $\mathbb{C}$ with $\Psi_Z(z) = e^{-z^2/2}$.
\end{lemma}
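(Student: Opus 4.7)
The plan is to show directly that the Fourier integral
\[
    \Psi_Z(z) = \frac{1}{\sqrt{2\pi}} \int_{\mathbb{R}} e^{izy} e^{-y^2/2}\, dy
\]
converges for every $z \in \mathbb{C}$, defines an entire function of $z$, and evaluates to $e^{-z^2/2}$. The strategy has three steps: absolute convergence, analyticity, and identification of the value.

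First I would bound the integrand. For $z = x + iy_0 \in \mathbb{C}$,
\[
    \bigl|e^{izy} e^{-y^2/2}\bigr| = e^{-y_0 y} e^{-y^2/2},
\]
and the Gaussian $e^{-y^2/2}$ dominates $e^{-y_0 y}$ for every fixed $y_0$, so the integral converges absolutely. Moreover, if $z$ ranges over a compact set $K \subset \mathbb{C}$, then $|y_0|$ is bounded on $K$, so the integrand is uniformly dominated by an integrable function (e.g., $e^{My} e^{-y^2/2}$ for $M = \sup_{z \in K}|\operatorname{Im}(z)|$). This uniform domination lets me justify applying Morera's theorem: for any closed triangle $\Delta \subset \mathbb{C}$, Fubini allows me to swap the order of integration in $\int_\Delta \Psi_Z(z)\, dz = \int_\mathbb{R} \bigl(\int_\Delta e^{izy}\, dz\bigr)e^{-y^2/2}\, dy / \sqrt{2\pi}$, and the inner integral vanishes by analyticity of $z \mapsto e^{izy}$. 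Hence $\Psi_Z$ is entire.

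To evaluate $\Psi_Z(z)$, I would complete the square inside the exponent:
\[
    izy - \tfrac{1}{2}y^2 = -\tfrac{1}{2}(y - iz)^2 - \tfrac{1}{2}z^2.
\]
Pulling out the constant factor gives $\Psi_Z(z) = e^{-z^2/2}\cdot \frac{1}{\sqrt{2\pi}} \int_{\mathbb{R}} e^{-(y - iz)^2/2}\, dy$. The remaining task is to show the shifted Gaussian integral equals $\sqrt{2\pi}$. I would do this via a rectangular contour argument: for $z = x + iy_0$, integrate $e^{-w^2/2}$ around the rectangle with vertices $\pm R$ and $\pm R + iy_0 \cdot (\text{adjusted shift})$; Cauchy's theorem gives zero contribution, and the vertical sides vanish as $R \to \infty$ because $|e^{-(\pm R + is)^2/2}| = e^{-(R^2 - s^2)/2}$ decays for $|s|$ bounded. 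This yields $\int_\mathbb{R} e^{-(y-iz)^2/2}\, dy = \int_\mathbb{R} e^{-y^2/2}\, dy = \sqrt{2\pi}$, so $\Psi_Z(z) = e^{-z^2/2}$.

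A cleaner alternative, avoiding the contour shift, is to note that $\Psi_Z$ and $z \mapsto e^{-z^2/2}$ are both entire and agree on $\mathbb{R}$ by the classical Gaussian characteristic function formula; the identity theorem then forces them to coincide on $\mathbb{C}$. I expect the main obstacle to be purely expository: namely, carefully justifying the exchange of limit and integral (either for Morera or for the contour shift). Both arguments hinge on the same fact --- the Gaussian's super-exponential decay swamps any exponential growth from the imaginary part of $z$ --- so the analysis is routine once that dominating bound is written down.
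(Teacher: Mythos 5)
Your proposal is correct, and its central computation (complete the square, then shift the contour of the Gaussian integral back to the real axis, killing the connecting segments via the super-exponential decay of $e^{-w^2/2}$) is essentially the paper's own argument; the paper runs the shift along a parallelogram with vertices $-L-iz$, $L-iz$, $L$, $-L$ and bounds the two oblique sides with the M-L inequality, which is the same mechanism as your vanishing vertical sides. Where you genuinely add something is in the other two steps. First, you explicitly establish that $\Psi_Z$ is entire via uniform domination on compacta, Fubini, and Morera; the paper never argues entirety directly --- it only computes the value of the integral for each fixed $z$ and lets the closed form $e^{-z^2/2}$ carry the analyticity, so your Morera step makes explicit what the paper leaves implicit. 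Second, your alternative via the identity theorem (both $\Psi_Z$ and $e^{-z^2/2}$ are entire and agree on $\mathbb{R}$ by the classical real characteristic function formula, hence agree on $\mathbb{C}$) is a legitimately different and cleaner route that avoids the contour shift altogether, at the cost of needing the Morera argument up front. Two trivial slips to fix if you write this up: the dominating function on a compact $K$ should be $e^{M|y|}e^{-y^2/2}$ rather than $e^{My}e^{-y^2/2}$, and after the substitution $w=y-iz$ the contour is the horizontal line $\operatorname{Im}w=-\operatorname{Re}z$, so the closed contour joining it to $\mathbb{R}$ is a rectangle in the substituted variable but a parallelogram in the original one --- either works, but you should commit to one picture rather than gesture at an ``adjusted shift.''
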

\begin{proof}
    Fix $z\in\mathbb{C}$. We begin by completing the square in the Fourier integral (\ref{FourierIntegral}),
    \begin{equation}\label{Fourierintegralstart}
        \int_{\mathbb{R}}e^{ixz}\dfrac{e^{-x^2/2}}{\sqrt{2\pi}}dx = e^{-z^2/2}\int_{\mathbb{R}} \dfrac{1}{\sqrt{2\pi}}e^{-(x-iz)^2/2}dx.
    \end{equation}
    This already has the desired form - once we show that the integral on the RHS is equal to $1$ we may conclude. Let $L>|z|$ and consider the finite integral,
    \begin{equation}
        I_L = \int_{-L}^L \dfrac{1}{\sqrt{2\pi}}e^{-(x-iz)^2/2}dx = \int_{\mathcal{C}_1} \dfrac{1}{\sqrt{2\pi}}e^{-\xi^2/2}d\xi,
    \end{equation}
    where $\mathcal{C}_1$ is the straight line segment $-L-iz\to L-iz$.
    Letting $L\to\infty$, $I_{\infty}$ is the integral we want to evaluate (the integral RHS of (\ref{Fourierintegralstart})). Let $\mathcal{C}$ be the closed loop contour $\mathcal{C} = \mathcal{C}_1\cup \mathcal{C}_2\cup \mathcal{C}_3 \cup \mathcal{C}_4$ encompassing the four line segments (a parallelogram) $$-L-iz\overset{\mathcal{C}_1}{\longrightarrow} L-iz\overset{\mathcal{C}_2}{\longrightarrow} L \overset{\mathcal{C}_3}{\longrightarrow} -L \overset{\mathcal{C}_4}{\longrightarrow} -L-iz.$$ We bound the integrals over $\mathcal{C}_2$ and $\mathcal{C}_4$ using the M-L inequality,
    \begin{align}
        \left|\int_{\mathcal{C}_2}\dfrac{1}{\sqrt{2\pi}}e^{-\xi^2/2}d\xi\right|\leq \dfrac{|z|}{\sqrt{2\pi}}\sup_{\xi\in\mathcal{C}_2}\left|e^{-\xi^2/2}\right|\leq \dfrac{|z|}{\sqrt{2\pi}}e^{-(L-|z|)^2/2},
    \end{align}
    with the same bound for the integral over $\mathcal{C}_4$. Taking the limit as $L\to\infty$, the bound goes to $0$ so that the integrals over $\mathcal{C}_2$ and $\mathcal{C}_4$ vanish as $L\to\infty$.  As $\phi(\xi) = e^{-\xi^2/2}/\sqrt{2\pi}$ is entire, its integral over the closed loop $\mathcal{C}$ is $0$ for all $|L|>z$. This means that in the limit $L\to\infty$,
    \begin{equation}
        \underbrace{\int_{\mathcal{C}_1} \dfrac{1}{\sqrt{2\pi}}e^{-\xi^2/2}d\xi}_{=I_\infty} + 0 + \int_{\mathcal{C}_3}\dfrac{1}{\sqrt{2\pi}}e^{-\xi^2/2}d\xi + 0 = 0.
    \end{equation}
    But for $L\to\infty$ the integral over $\mathcal{C}_3$ is just the Gaussian integral (in the reverse direction),
    \begin{equation}
        \int_{\mathcal{C}_3}\dfrac{1}{\sqrt{2\pi}}e^{-\xi^2/2}d\xi =  \int_{\infty}^{-\infty}\dfrac{1}{\sqrt{2\pi}}e^{-\xi^2/2}d\xi = -1.
    \end{equation}
    So indeed $I_{\infty} = 1$ and we conclude.
\end{proof}

With this in hand, we can return to the proof of Proposition \ref{existenceprop}.
\begin{proof}
Fix $\omega\in\mathbb{R}$. First, we condition\footnote{Clearly this assumes the original expectation exists. We argue this a posteriori.} the expectation on $V$,
\begin{align}
    \mathbb{E}\left[e^{\left(-\theta+\sqrt{\theta^2+2i\omega}\right)\left(\theta V+\sqrt{V}Z\right)}\right] &= \mathbb{E}_V\left[\mathbb{E}_Z\left[e^{\left(-\theta+\sqrt{\theta^2+2i\omega}\right)\left(\theta V+\sqrt{V}Z\right)}\big\vert V\right]\right]\\
    &= \mathbb{E}_V\left[e^{\left(-\theta^2+\theta\sqrt{\theta^2+2i\omega}\right)V}\mathbb{E}_Z\left[e^{ \left(-\theta+\sqrt{\theta^2+2i\omega}\right) \sqrt{V}Z}\big\vert V\right]\right]\label{conditionedsqrtvz}.
\end{align}
Using Lemma \ref{ExtendedGaussianCf}, the inner conditional expectation exists and is given by,
\begin{align}
    \mathbb{E}_Z\left[e^{ \left(-\theta+\sqrt{\theta^2+2i\omega}\right) \sqrt{V}Z}\big\vert V\right] &= \Psi_Z\left(-i\left(-\theta+\sqrt{\theta^2+2i\omega}\right)\sqrt{V}\right)= e^{-(-i)^2\left(-\theta+\sqrt{\theta^2+2i\omega}\right)^2V/2}\\
    &= e^{\left(\theta^2 + \theta^2+2i\omega-2\theta\sqrt{\theta^2+2i\omega}\right)V/2} = e^{i\omega V} e^{-\left(-\theta^2+\theta\sqrt{\theta^2+2i\omega}\right)V},
\end{align}
so that (\ref{conditionedsqrtvz}) reduces down perfectly to
\begin{equation}
    \mathbb{E}_V\left[e^{\left(-\theta^2+\theta\sqrt{\theta^2+2i\omega}\right)V}\mathbb{E}_Z\left[e^{\left(-\theta+\sqrt{\theta^2+2i\omega}\right)\sqrt{V}Z}\big\vert V\right]\right] = \mathbb{E}\left[e^{i\omega V}\right].
\end{equation}
As $V$ is a real random variable this expectation (the characteristic function of $V$) must exist so we have indeed verified that (\ref{verifythisexpectation}) exists for all $\omega\in\mathbb{R}$ as desired. We conclude the existence of the transform with the existence of the characteristic function inverse
\begin{equation}
    f_V(x) = \psi^{-1}\left\{E\left[e^{i\omega V}\right]\right\}(x) = \frac{1}{2\pi}\lim_{R\to\infty}\int_{-R}^{R}e^{-\omega^2/(2R^2)}e^{-i\omega x} \psi_Y(\omega)d\omega.
\end{equation}
\end{proof}
As discussed after Remark \ref{chartochar}, the inversion can be simplified in the case that $f_Y$ is smooth or satisfies the (weaker) Dini criterion at $x$:
\begin{equation}
    \int_{-1}^{1}\left\vert\dfrac{f_Y(x+t) - f_Y(x)}{t}\right\vert dt < \infty,
\end{equation}
(see Katznelson \cite{Katznelson} Theorem 2.5 for a justification), or $\psi_Y$ is itself $\mathcal{L}^1$ integrable. If any of these criteria are met, then we can recover the density with the simplified inversion formula
\begin{equation}
    f_Y(x) = \psi^{-1}\left\{\psi_Y(\omega)\right\}(x) = \frac{1}{2\pi}\lim_{R\to\infty}\int_{-R}^{R}e^{-i\omega x} \psi_Y(\omega)d\omega.
\end{equation}

\section{Example Empirical Analysis of a Decomposed Time-Change Process}\label{SectionEA}
Empirically, we observe the full distribution $X$ (the log returns data) and we wish to understand the subordinating, time-change process. We can fix a vector containing different values of time increments $t$ (holding periods), and then apply this transform as in remark (\ref{remarkTransform1}) to data across each time increment size to study the evolution of the subordinator process as the size of the time increment increases (e.g. for a log-return process, observing the evolution of the process over periods of 1 day, multiple days, or a week). This allows one to: a) characterise the subordinator process and b) observe whether the Lévy process assumptions are justified. Assuming that the value of $\theta$ is given (e.g. $\theta=0$ for a semi-martingale assumption in accordance with Theorem \ref{Monroethm}), this directly provides an explicit form for the evolution of the time-change. 

We show the results of naïvely applying the transform to daily S$\&$P500 log returns data by fixing $\theta=0$ in Figure \ref{Gammaprocessfits}. This is done by calculating the empirical characteristic function of $V$, $\hat{\psi}_V(t)$ from sample log returns $\{x_j\}_{j=0}^{N-1}$ and passing this through a DFT to return $\hat{f}_V(v)$.
\begin{figure}[h!]
    \centering
    \subfloat[\centering]{{\includegraphics[width=0.45\linewidth]{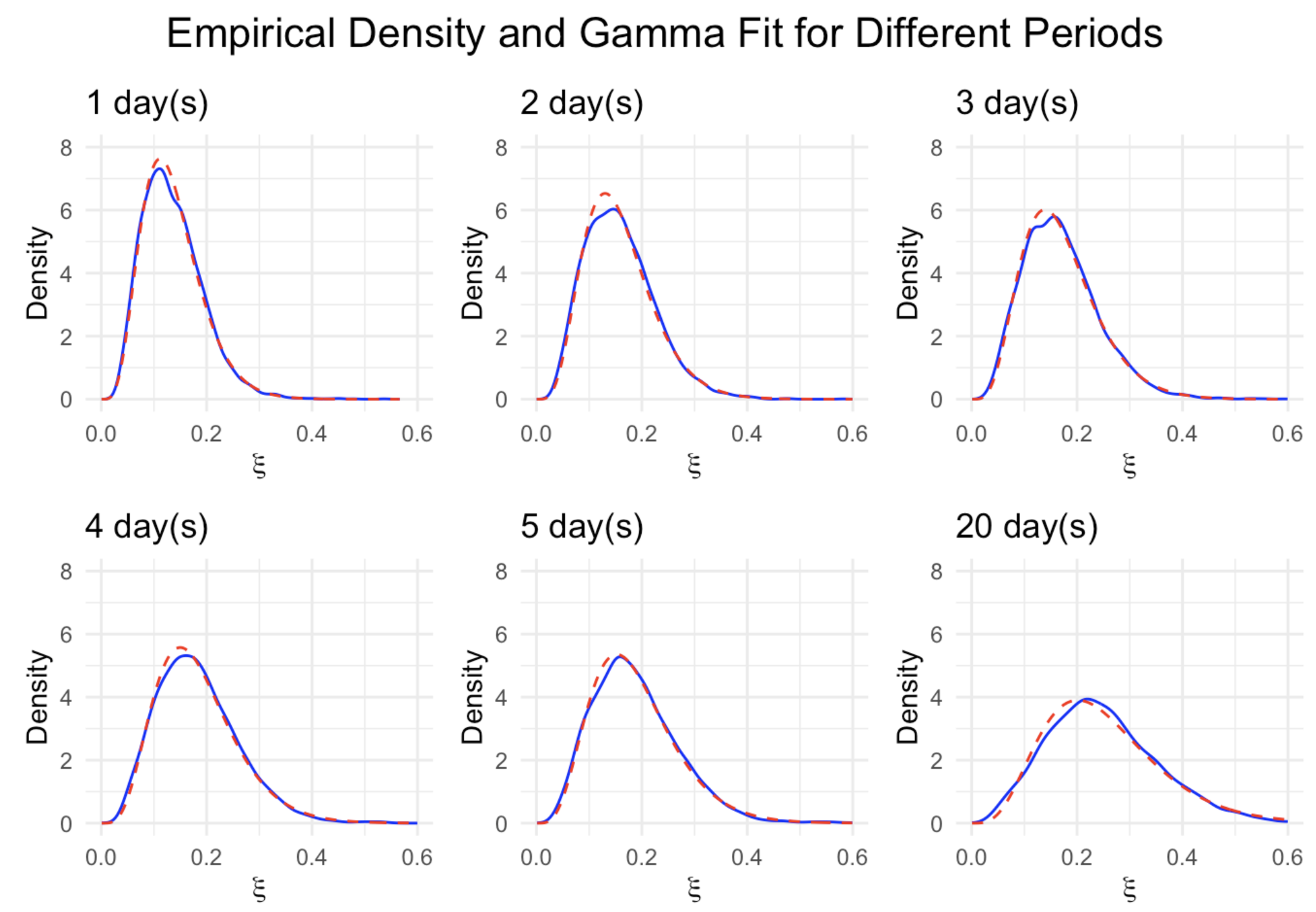} }}
    \qquad
    \subfloat[\centering]{{\includegraphics[width=0.45\linewidth]{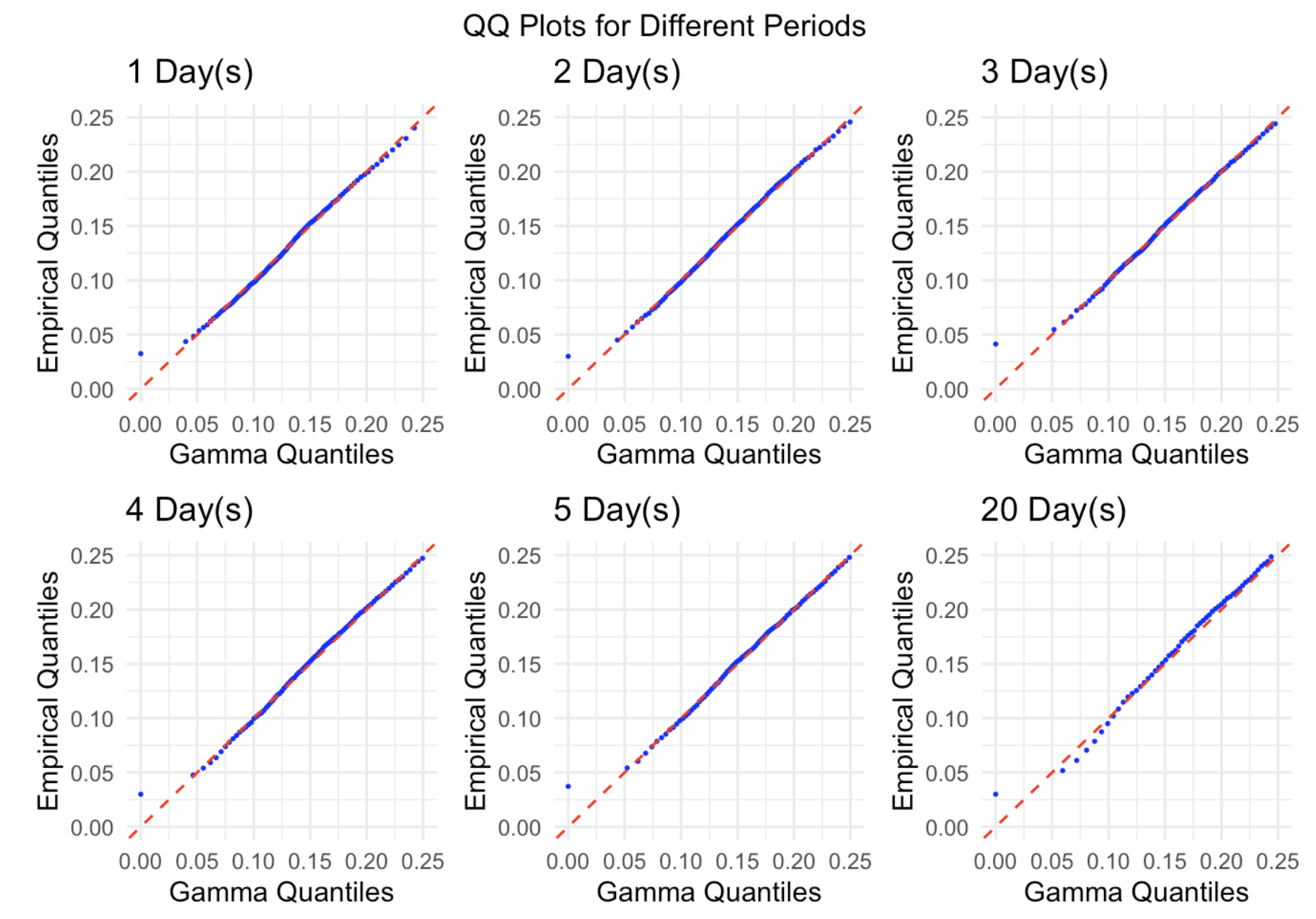} }}
    \caption{Fit of Gamma process as the subordinator process for S$\&$P500 daily log returns (January 2022 to January 2024) under the time-subordinator transform}
    \label{Gammaprocessfits}
\end{figure}
The resulting distributions appear relatively consistent with a Gamma distribution, and justify the consideration of the Gamma process as the subordinator process (though there is still room for improvement). Figure (a) also appears to show an increase in the mean and variance of the distributions as the size of the time-increments increases. We visualise the evolution of the empirical time-change process, and the fitted Gamma process over different sized time-increments in Figure \ref{Gammaprocessevolution}.
\begin{figure}[h!]
    \centering
    \subfloat[\centering]{{\includegraphics[width=0.46\linewidth]{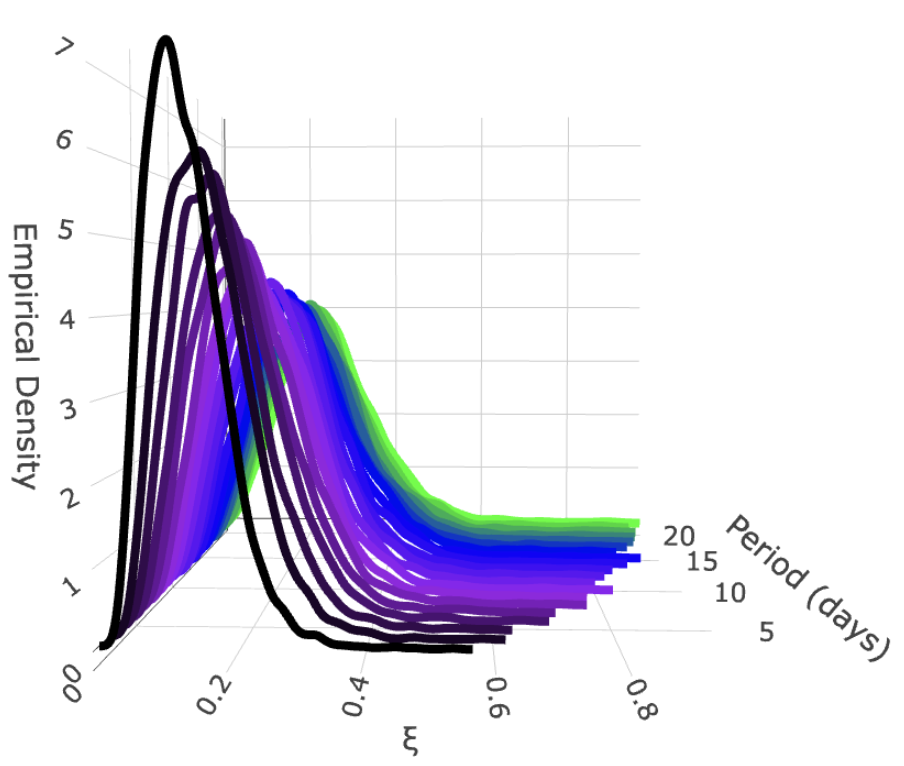} }}
    \qquad
    \subfloat[\centering]{{\includegraphics[width=0.46\linewidth]{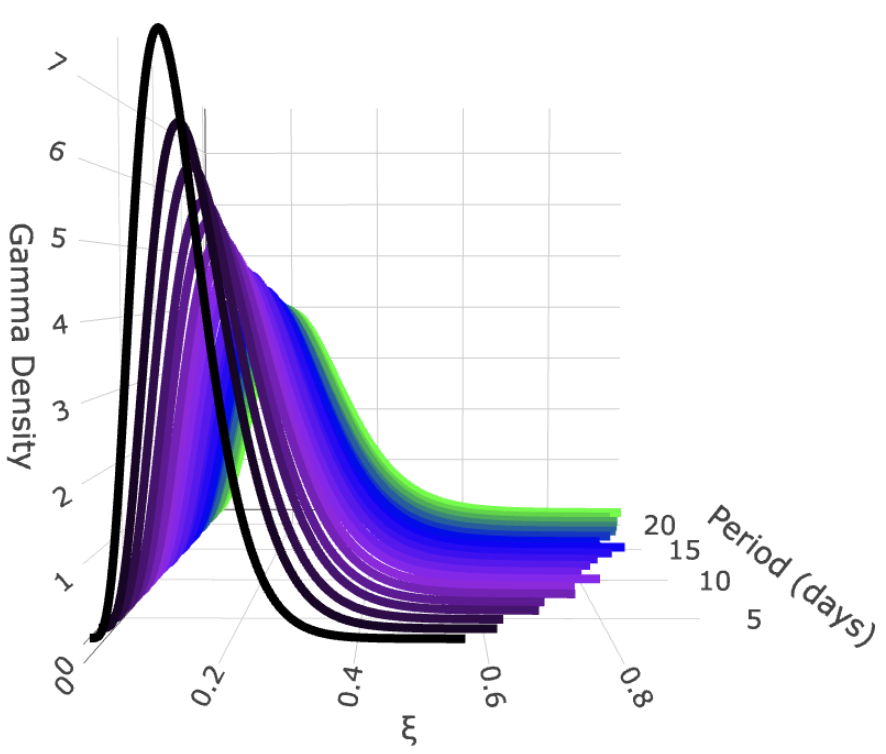} }}
    \caption{Evolution of (a) the empirical time-change process  and (b) the fitted Gamma process  for S$\&$P500 daily log returns (January 2022 to January 2024) from the time-change transform}
    \label{Gammaprocessevolution}
\end{figure}

Analysing this Figure, the mean does indeed increase with the size of the time-increments (days) of the process, and the variance around the mean increases similarly too. This is confirmed by evaluating $\mu$ and $\nu$ over the different time-periods in Figure \ref{Munu}.
\begin{figure}[h!]
    \centering
    \subfloat[\centering]{{\includegraphics[width=0.46\linewidth]{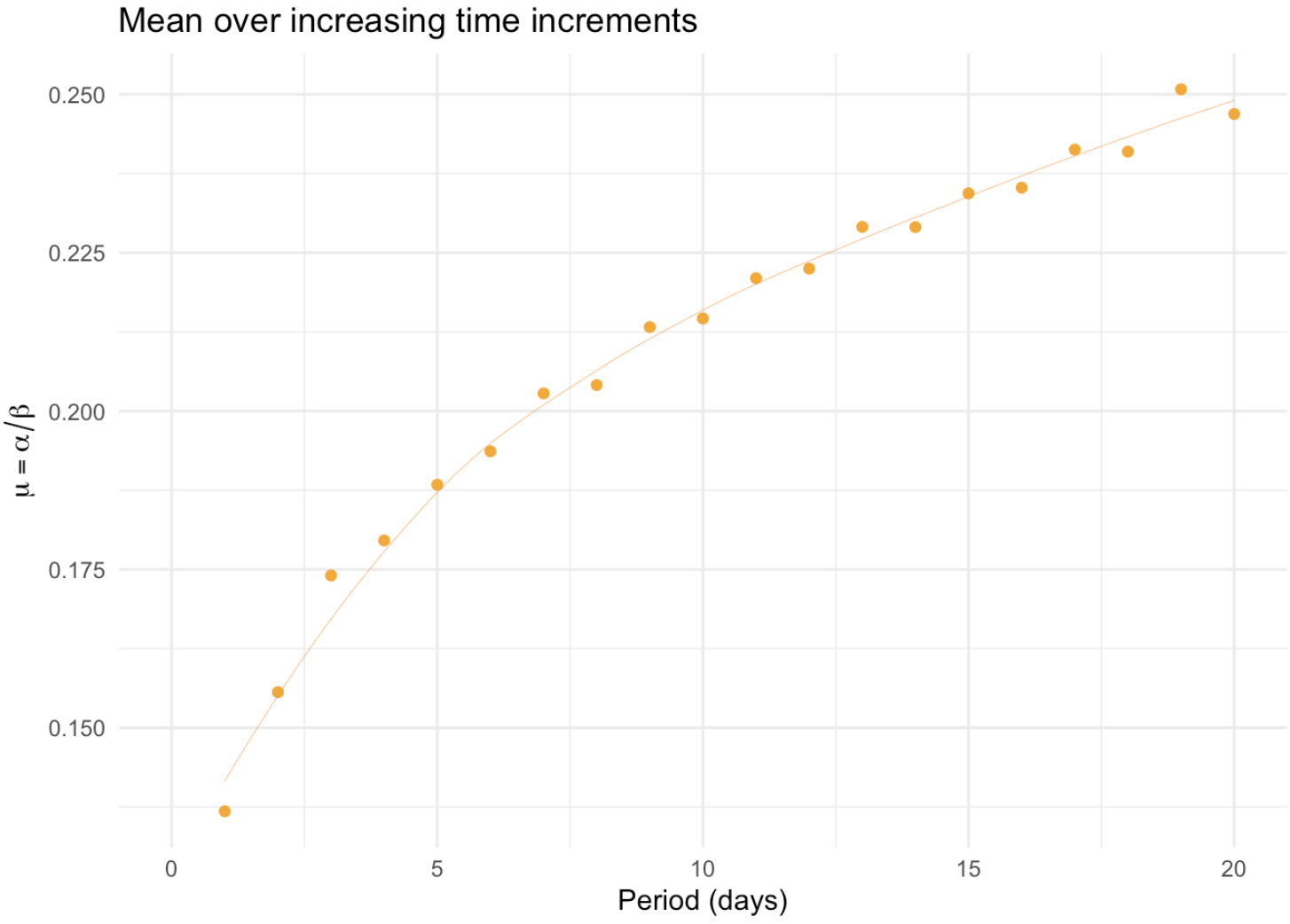} }}
    \qquad
    \subfloat[\centering]{{\includegraphics[width=0.46\linewidth]{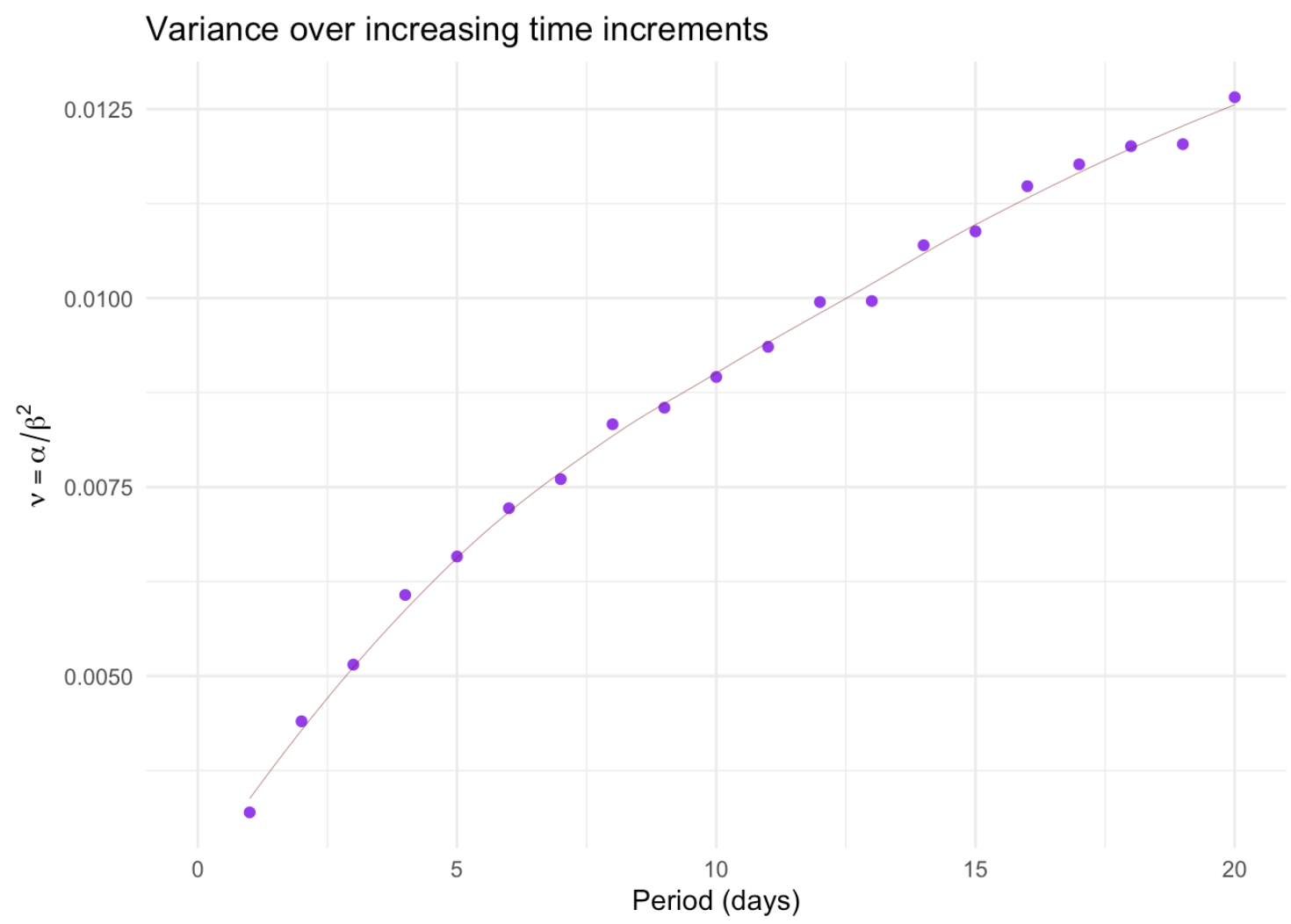} }}
    \caption{The mean and variance of the fitted Gamma process for $\theta = 0$}
    \label{Munu}
\end{figure}
However, the assumption that the mean and variance of the distribution of $\gamma_t$ increase linearly with time (i.e $\gamma_{t+h}-\gamma_t\sim \Gamma_{\mu h,\nu h}$) is not quite true - the rates of increase appear to slow over time. This suggests that simple Lévy subordinator assumptions may not be entirely appropriate as these would imply a linear trend in both variables over time, and that the true time-change dynamics (in correspondence with Theorem \ref{Monroethm}) are more nuanced.

These findings present good scope for future study. For example if we drop the semi-martingale assumption (and so cannot refer to Theorem \ref{Monroethm}) and don't assume the value of the hyper-parameter $\theta$, one again needs to tune this value as in the case for the variance-mixture transform (here the interested reader is directed towards Masuda (2014) \cite{masuda}). As the drift term interacts with both the mean and variance of the time-subordinator, perhaps we can expect improvement over longer time periods where the Lévy process assumptions may be more appropriate. One could even try to include time series dependence for these parameters as in Mercuri and Bellini \cite{MercuriBellini} (2010); however, for our study we wish to consider inference based on a simpler time-subordination mapping (i.e. modelling the economically relevant market `business time' as stochastic), and introducing time-series dependence complicates this viewpoint dramatically.

\section{Discussion}\label{SectionSummary}
This paper explores the concept of stochastic time change in modelling stock-price dynamics, first presenting results on Gaussian variance-mean mixtures and generalising to the class of continuous time-subordinated Brownian motion models. We explore the Variance Gamma process as a key example throughout.

We detail a Fourier method to consider the appropriate variance distribution in the Gaussian variance-mixture and formulate the time-subordinator transform which reduces the study of a subordinated Brownian motion process to the more focused study of its subordinating process by yielding an estimate for the subordinating distribution without using the quadratic variation. We ultimately prove the theoretical existence of this transform for application.

A brief empirical study of log-returns data from the S$\&$P500 market using this transform follows - we find the Gamma process appears to be an appropriate subordinator process for the Brownian motion for the data; however, the Lévy subordinator assumptions appear too restrictive, providing scope for future study into model extensions.

\section*{Acknowledgements}
We would like to acknowledge the Imperial-MIT International Research Opportunities Programme without which this research collaboration would not have been possible. Rohan Shenoy would like to acknowledge the financial support provided by the Mathematics Department at Imperial College London, the Imperial International Relations Office, and the UK Government's Turing Scheme.

\bibliographystyle{unsrt}
\bibliography{Bibliography.bib}

\begin{thebibliography}{10}

\bibitem{blackscholes}
Fischer Black and Myron Scholes.
\newblock The pricing of options and corporate liabilities.
\newblock {\em Journal of political economy}, 81(3):637, 1973.

\bibitem{ShapiroWilk1965}
S.~S. Shapiro and M.~B. Wilk.
\newblock An analysis of variance test for normality (complete samples).
\newblock {\em Biometrika}, 52(3/4):591--611, 1965.

\bibitem{Clark1973}
Peter~K. Clark.
\newblock A subordinated stochastic process model with finite variance for speculative prices.
\newblock {\em Econometrica}, 41(1):135--155, 1973.

\bibitem{VeraartWinkel}
Almut Veraart and Matthias Winkel.
\newblock {\em Time change}, volume~4, pages 1812--1816.
\newblock John Wiley \& Sons Ltd, 2010.

\bibitem{MadanMilne}
Frank Milne and Dilip Madan.
\newblock Option pricing with v. g. martingale components.
\newblock {\em Queen’s University}, 1991.

\bibitem{MadanCarrChang}
Dilip Madan, Peter Carr, and Eric Chang.
\newblock The variance gamma process and option pricing.
\newblock {\em European Finance Review}, page 79–105, 1998.

\bibitem{skororussia}
Anatoliy Skorokhod.
\newblock Issledovaniya po teorii sluchainykh protsessov, 1961.

\bibitem{Monroe}
Itrel Monroe.
\newblock On embedding right continuous martingales in brownian motion.
\newblock {\em The Annals of Mathematical Statistics}, 43(4):1293--1311, 1972.

\bibitem{DubinsSchwarz}
Gideon~Schwarz Lester~Dubins.
\newblock On continuous martingales.
\newblock {\em Proceedings of the National Academy of Sciences of the United States of America}, 53, 1965.

\bibitem{Variancemeanmixtures}
J.~Kent O.~Barndorff-Nielsen and M.~Sørensen.
\newblock Normal variance-mean mixtures and z distributions.
\newblock {\em International Statistical Review Vol. 50, No. 2}, pages 145--159, 1982.

\bibitem{Katznelson}
Yitzhak Katznelson.
\newblock {\em An Introduction to Harmonic Analysis}.
\newblock Cambridge Mathematical Library. Cambridge University Press, 3 edition, 2004.

\bibitem{ECFYU}
Jun Yu.
\newblock Empirical characteristic function estimation and its applications.
\newblock {\em Econometric Reviews}, 23(2):93--123, 2004.

\bibitem{Applebaum}
David Applebaum.
\newblock {\em Lévy Processes and Stochastic Calculus}.
\newblock Cambridge Studies in Advanced Mathematics. Cambridge University Press, 2004.

\bibitem{pitmanyor}
Jim Pitman and Marc Yor.
\newblock A guide to brownian motion and related stochastic processes, 2018.

\bibitem{skorokhod}
Anatoliy Skorokhod.
\newblock Studies in the theory of random processes. (translated from the russian by scripta technica, inc), 1965.

\bibitem{Obloj}
Jan Obłój.
\newblock The skorokhod embedding problem and its offspring.
\newblock {\em Probability Surveys}, 1(none), January 2004.

\bibitem{Lukacs}
Eugene Lukacs.
\newblock Characteristic functions.
\newblock {\em Griffin}, 1970.

\bibitem{masuda}
Hiroki Masuda.
\newblock Parametric estimation of {Lévy} processes, 2014.

\bibitem{MercuriBellini}
Lorenzo Mercuri and Fabio Bellini.
\newblock Option pricing in a dynamic variance-gamma model.
\newblock {\em Dipartimento di Metodi Quantitativi, Università di Milano Bicocca, Italy}, 03 2010.

\end{thebibliography}


\end{document}